\def\bm#1{\mbox{\boldmath $#1$}}
\theoremstyle{definition}
\newtheorem{theorem}{Theorem}
\newtheorem{corollary}{Corollary}
\newtheorem{remark}{Remark}
\begin{document}
\title{IRS-Aided Non-Orthogonal ISAC Systems: Performance Analysis and Beamforming Design}
\author{\IEEEauthorblockN{Zhouyuan Yu, Xiaoling Hu, {\em Member, IEEE},   Chenxi Liu, {\em Senior Member, IEEE}, \\ Mugen Peng, {\em Fellow, IEEE,}  and Caijun Zhong, {\em Senior Member, IEEE}}
\thanks{Z. Yu, X. Hu, C. Liu, and M. Peng  are with the State Key Laboratory of Networking and Switching Technology, Beijing University of Posts and
Telecommunications, Beijing 100876, China (e-mail: \{yzy9912, xiaolinghu, chenxi.liu, pmg\}@bupt.edu.cn).}
\thanks{C. Zhong is with the College of Information Science and Electronic Engineering, Zhejiang University, Hangzhou 310027, China (email: caijunzhong@zju.edu.cn).}
}

\maketitle

\begin{abstract}
Intelligent reflecting surface (IRS) has shown its effectiveness in facilitating orthogonal time-division integrated sensing and communications (TD-ISAC), in which the sensing task and the communication task occupy orthogonal time-frequency resources, while the role of IRS in the more interesting scenarios of non-orthogonal ISAC (NO-ISAC) systems has so far remained unclear.
In this paper, we consider an IRS-aided NO-ISAC system, where a distributed IRS is deployed to assist concurrent communication and location sensing for a blind-zone user, occupying non-orthogonal/overlapped time-frequency resources. We first propose a modified Cramer-Rao lower bound (CRLB) to characterize the performances of both communication and location sensing in a unified manner. We further derive the closed-form expressions of the modified CRLB in our considered NO-ISAC system, enabling us to identify the fundamental trade-off between the communication and location sensing performances. In addition, by exploiting the modified CRLB, we propose a joint active and passive beamforming design algorithm that achieves a good communication and location sensing trade-off. Through numerical results, we demonstrate the superiority of the IRS-aided NO-ISAC systems over the IRS-aided TD-ISAC systems, in terms of both communication and localization performances. Besides, it is shown that the IRS-aided NO-ISAC system with random communication signals can achieve comparable localization performance to the IRS-aided localization system with dedicated positioning reference signals. Moreover, we investigate the trade-off between communication performance and localization performance and show how the performance of the NO-ISAC system can be significantly boosted by increasing the number of the IRS elements.
\end{abstract}
\begin{IEEEkeywords}
Integrated sensing and communication (ISAC), intelligent reflecting surface (IRS), Cramer-Rao lower bound (CRLB), discrete phase shifts, joint active and passive beamforming.
\end{IEEEkeywords}

\section{Introduction}
\par Intelligent reflecting surface (IRS) has emerged as a revolutionary technology for the sixth-generation (6G) mobile communications in terms of extending coverage and enhancing spectrum efficiency \cite{9424177,9133184}. Specifically, IRS is a digitally-controlled metasurface constructed of a large array of reflecting elements that can proactively adjust the phase shifts and amplitudes of the incident signals to reshape the wireless propagation environment \cite{wu2019intelligent,gong2020toward}. Compared to traditional active relays, the IRS passively reflects the incident signal without any specialized radio-frequency (RF) processing, thereby substantially reducing hardware cost and energy consumption \cite{you2020channel,pan2021reconfigurable}. Moreover, IRS can be deployed in the environment with dense obstacles for  establishing virtual line-of-sight (VLoS) links to bypass signal blockage between transceivers, thus achieving ubiquitous coverage efficiently and cost-effectively \cite{9722893,9743440}.

\par Due to the aforementioned attractive characteristics, IRS has been extensively investigated in both communication and localization fields. As for the IRS-aided communication, through joint active and passive beamforming, the IRS can significantly enhance the signal-to-noise ratio (SNR) \cite{9066923,9362274,8811733}, spectral/energy efficiency \cite{huang2019reconfigurable,zhou2020spectral,yu2019miso,fang2020energy}, and data rate \cite{9039554,9110912,9198125,hu_location}, especially for users in the non-line-of-sight (NLoS) region. For example, the authors in \cite{8811733} revealed that in an IRS-aided communication system, the received SNR scales with the square of the number of IRS reflecting elements. The work \cite{huang2019reconfigurable} considered an energy efficiency maximization problem via joint active and passive beamforming, and demonstrated that the IRS-aided MISO communication system can achieve up to $3$ times higher energy efficiency than the conventional amplify-and-forward (AF) relay system with half-duplex operation. In \cite{9039554}, the authors formulated a data rate maximization problem. To reduce the beamforming design complexity, an IRS element grouping strategy was adopted. Then, transmit power allocation and IRS reflection coefficients were joint optimized. As for the IRS-aided localization, the location-related information, such as received signal strength (RSS), angle of arrival (AoA), angle of departure (AoD) as well as time of arrival (ToA), can be extracted from the IRS VLoS link for user localization \cite{8264743,9456027,9500281,9361184,9732186,9725255,9724202,9508883,wang2021joint,9513781,9593241,9593200}.  In \cite{8264743}, the authors first investigated the potential of using IRS for positioning purposes and analyzed the Cramer-Rao lower bound (CRLB) for localization with IRS. It was demonstrated that the localization performance improves quadratically with IRS’s surface area. With the assistance of IRSs, the authors in \cite{9456027} proposed a RSS-based multi-user localization scheme, which achieves at least 3 times lower localization error than the traditional RSS-based localization scheme without the IRS. Later on, the authors in \cite{wang2021joint} used the IRS for the localization of a blind-zone user, and proposed an angle-based positioning algorithm, which achieves centimeter-level positioning accuracy. 

\par Nevertheless, in all the aforementioned works, IRS-aided communication and localization systems are designed, independently. Recently, an IRS-aided joint communication and localization system was proposed in \cite{wang2021joint2} and \cite{he2021beyond}, where communication and localization functions are realized in a time-division integrated sensing and communication (TD-ISAC) manner, and the time allocation ratio was optimized by taking into account both communication performance and localization performance. However, in the IRS-aided TD-ISAC system, communication and localization tasks occupy orthogonal/non-overlapped time-frequency resources. For a more efficient use of the time-frequency resources, it is desirable to realize communication and localization functions in a non-orthogonal/overlapped ISAC manner, i.e., on the same time and spectrum resources. Hence, the works \cite{hu_ISAC_TWC,hu_bpa} proposed a semi-passive IRS aided non-orthogonal ISAC (NO-ISAC) system, where uplink communication and localization for a single user are realized on  non-orthogonal/overlapped time-frequency resources at the cost of equipping partial sensing elements on the IRS, and designed the sensing-assisted beamforming scheme for improving communication performance. However, few works have investigated the joint communication and location sensing performance of the IRS-aided NO-ISAC system. For the IRS-aided NO-ISAC system, communication performance and localization performance are tightly coupled to each other, but have different performance evaluation metrics, making it difficult to make a balance between them. Hence, it is required to devise a unified metric for characterizing the joint communication and localization performance, as well as to deeply investigate the trade-off between communication performance and localization performance for better guiding the design of the IRS-aided NO-ISAC system.

\par Motivated by the above issues, in this paper, we propose an IRS-aided NO-ISAC system, where a distributed passive IRS is deployed to enable concurrent location sensing and downlink communication on non-orthogonal/overlapped time-frequency resources for a blind-zone user.  A novel metric is devised for characterizing the joint communication and location sensing performance of the IRS-aided NO-ISAC system. Based on this new performance metric, we design a joint active and passive beamforming algorithm and investigate the trade-off between communication performance and localization performance. Our main contributions are summarized as follows.
\begin{itemize}
	\item We establish an IRS-aided NO-ISAC system, where data transmission and user positioning for the blind-zone user are simultaneously realized  on non-orthogonal time-frequency resources, with the assistance of a distributed passive IRS.
	\item We novelly develop a  modified CRLB metric for characterizing the joint communication and localization performance of the IRS-aided NO-ISAC system, and derive its closed-form expression, which allows us to deeply investigate the trade-off between communication and localization performances.
	\item By invoking the modified CRLB metric, we propose a cross entropy (CE)-based beamforming algorithm to balance communication performance and localization performance.
	\item Simulation results demonstrate the effectiveness of the proposed beamforming algorithm in making trade-off between communication performance and localization performance, as well as the superiority of the proposed IRS-aided NO-ISAC system to the IRS-aided TD-ISAC system in terms of both communication performance and location sensing performance. Also, it is shown that the proposed IRS-aided NO-ISAC system can achieve comparable localization performance to the IRS-aided localization system with dedicated positioning reference signals. In addition, the existence of the trade-off region and the communication/localization saturation region is revealed.
\end{itemize}

\par The remainder of this paper is organized as follows. Section~\ref{section2} introduces the system model of the IRS-aided NO-ISAC system. Section~\ref{section3} presents the performance characterization of the IRS-aided NO-ISAC system, while Section~\ref{section4} presents a CRLB-based beamforming algorithm. Numerical results are  provided in Section~\ref{section5}. Finally, Section~\ref{section6} concludes this paper.

\emph{{Notations:}} Vectors and matrices are denoted by boldface lower case and boldface upper case, respectively. The superscripts $\left( \cdot \right) ^{\mathrm{T}}$ and $\left( \cdot \right) ^{\mathrm{H}}$ denote the operations of transpose and Hermitian transpose, respectively. The Euclidean norm, absolute value and floor function are respectively denoted by $\left\| \cdot \right\|$, $\left| \cdot \right|$, and $\lfloor \cdot \rfloor $. $\mathbb{E} \left\{ \cdot \right\} $ denotes the statistical expectation. Moreover, $\mathcal{C} \mathcal{N} \left( 0,\sigma ^2 \right) $ denotes the circularly symmetric complex Gaussian (CSCG) distribution with zero mean  and variance $\sigma ^2$. For matrices,
$\left[ \cdot \right] _{ij}$ denotes the $(i,j)$-th element of a matrix, $\text{tr}\left( \cdot \right) $ represents the matrix trace, $\text{diag}\left( \cdot \right) $ denotes a square diagonal matrix with the elements in $\left( \cdot \right) $ on its main diagonal, $\mathbf{0}_{N\times M}$ and $\mathbf{1}_{N\times M}$ denote the $N\!\times\! M$ all-zero matrix and all-one matrix, respectively. For vectors, $\left[ \cdot \right] _i$ denotes the $i$-th entry of a vector. Besides, $j$ in $e^{j\theta}$ denotes the imaginary unit, and we use $\mathcal{R} \left( \cdot \right) $ to denote the real part of the argument.

\section{System Model}\label{section2}

\begin{figure}[htb]
  \centering
  \includegraphics[width=5in]{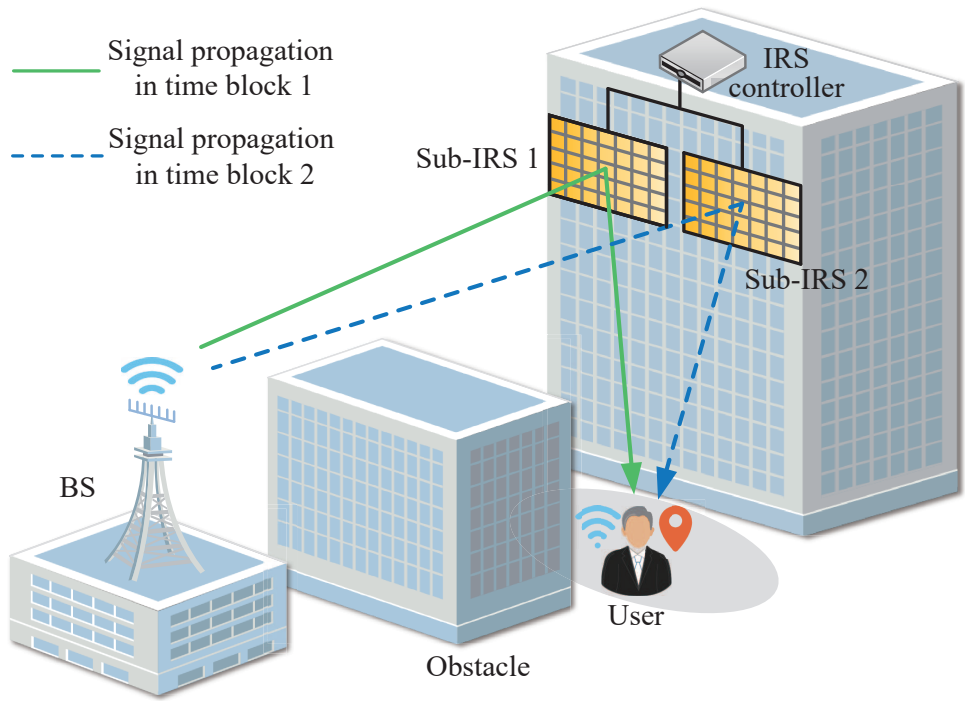}
  \caption{An IRS-aided NO-ISAC system.}
  \label{ISAC}
\end{figure}

\par As illustrated in Fig.~\ref{ISAC}, we consider an IRS-aided NO-ISAC system, where a distributed passive IRS is deployed to assist concurrent data transmission and localization for a blind-zone user. We consider that the line-of-sight (LoS) path between the BS and the user is obstructed, and the IRS is deployed to establish a strong virtual VLoS reflection path between them. The distributed passive IRS is composed of $2$ sub-IRSs, each of which has an $L=L_y\times L_z$ uniform rectangular array (URA) lying on the $y$-$o$-$z$ plane. The BS has an $N_{t}$-element uniform linear array (ULA) along the $y$ axis, while the user has an $M=M_y\times M_z$ URA lying on the $y$-$o$-$z$ plane.
\subsection{NO-ISAC Transmission Protocol}
\begin{figure}[htb]
  \centering
  \includegraphics[width=5.5in]{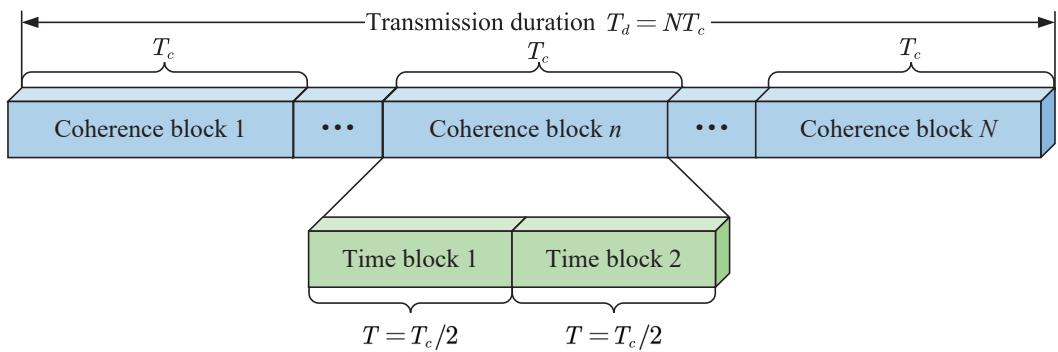}
  \caption{Illustration of the NO-ISAC transmission protocol.}
  \label{protocol}
\end{figure}
\par We consider a NO-ISAC transmission protocol as illustrated in Fig.~\ref{protocol}, where the whole transmission duration consists of $N$ consecutive coherence blocks, during each of which, both BS-IRS and IRS-user channels remain unchanged. One coherence block composed of $T_c$ time slots (symbol durations) is further divided equally into two time blocks each with $T=T_c/2$ time slots. During the $i$-th time block, the $i$-th sub-IRS assists downlink data transmission and localization for the blind-zone user by reflecting, with another sub-IRS switched off. Based on the received signals during the $i$-th time block, the user conducts demodulation for communication and AoA estimation for user positioning, simultaneously. Specifically, the $T$ information-carrying symbols  and the AoA pair from the $i$-th sub-IRS to the user are jointly estimated. For each coherence block, combing the AoA pairs estimated during its two time blocks yields the user's location.
\subsection{Signal Model}
\par Without loss of generality, we focus on the $i$-th time block of the $n$-th coherence block, during which the BS sends $x_{i}^{n}\left( t \right) $, satisfying $\mathbb{E} \left\{ |x_{i}^{n}\left( t \right) |^2 \right\} =1$ and following the complex Gaussian distribution $\mathcal{C} \mathcal{N} \left( \mu _x,\sigma _{x}^{2} \right) $, to the user at time slot $t\in \mathcal{T} \triangleq \left\{ 1,\cdots ,T \right\} $. For ease of notation, we drop the subscript $i$ and the superscript $n$ hereafter, and the received signal at the user can be expressed as
\begin{align}\label{S1}
\mathbf{y}\left( t \right) =\mathbf{H}_{\mathrm{I}2\mathrm{U}}\mathbf{\Theta H}_{\mathrm{B}2\mathrm{I}}\mathbf{w}x\left( t \right) +\mathbf{z}\left( t \right) ,t\in \mathcal{T},
\end{align}
where $\mathbf{H}_{\mathrm{I}2\mathrm{U}}\in \mathbb{C} ^{M\times L}$ and $\mathbf{H}_{\mathrm{B}2\mathrm{I}}\in \mathbb{C} ^{L\times N_t}$ denote the channels from the IRS to the user and from the BS to the IRS, respectively. The phase shift matrix of the IRS is defined as $\mathbf{\Theta }=\mathrm{diag}\left( \boldsymbol{\xi } \right)$, with the phase shift beam being $\boldsymbol{\xi }=\left[ e^{j\vartheta _1},\cdots e^{j\vartheta _l},\cdots e^{j\vartheta _L} \right] ^{\mathrm{T}}$. For ease of practical implementation, the phase shifts of the IRS take values from a finite set $\mathcal{F} =\left\{ 0,\frac{2\pi}{2^b},\cdots ,\frac{2\pi}{2^b}\left( 2^b-1 \right) \right\} $, where $b$ is the bit-quantization number. $\mathbf{w}\in \mathbb{C} ^{N_t\times 1}$ is the BS beamforming vector subject to the transmit power constraint $\left\| \mathbf{w} \right\| ^2\leqslant P_t$. In addition, $\mathbf{z}\in \mathbb{C} ^{M\times 1}$ denotes the additive white Gaussian noise (AWGN), whose elements follow the complex Gaussian distribution $\mathcal{C} \mathcal{N} \left( 0,\sigma _{z}^{2} \right)$.
\par By stacking all the  ${\bf y}(t)$'s, $t \in \mathcal{T}$ into an $MT \times 1$ vector, we obtain the signals received during one time block as
\begin{align}
\mathbf{y}=\left[ \begin{array}{c}
	\mathbf{y}\left( 1 \right)\\
	\vdots\\
	\mathbf{y}\left( T \right)\\
\end{array} \right] =\left[ \begin{array}{c}
	\mathbf{h}_x\left( 1 \right)\\
	\vdots\\
	\mathbf{h}_x\left( T \right)\\
\end{array} \right] +\left[ \begin{array}{c}
	\mathbf{z}\left( 1 \right)\\
	\vdots\\
	\mathbf{z}\left( T \right)\\
\end{array} \right] ,
\end{align}
where
\begin{align}
\mathbf{h}_x\left( t \right)=\mathbf{H}_{\mathrm{I}2\mathrm{U}}\mathbf{\Theta H}_{\mathrm{B}2\mathrm{I}}\mathbf{w}x\left( t \right).\label{mu} 
\end{align}
\subsection{Channel Model}
\par In general, the IRS operating in the mmWave band is depoloyed with LoS paths to both the BS and the user. Hence, the channel from the IRS to the user is modelled as \cite{yuanintell}
\begin{align}\label{S2}
\mathbf{H}_{\mathrm{I}2\mathrm{U}}=\alpha _{\mathrm{I}2\mathrm{U}}\mathbf{b}_{\mathrm{U}}\left( \gamma _{\mathrm{I}2\mathrm{U}}^{\mathrm{A}},\varphi _{\mathrm{I}2\mathrm{U}}^{\mathrm{A}} \right) \mathbf{b}_{\mathrm{I}}^{\mathrm{H}}\left( \gamma _{\mathrm{I}2\mathrm{U}}^{\mathrm{D}},\varphi _{\mathrm{I}2\mathrm{U}}^{\mathrm{D}} \right), 
\end{align}
where $\alpha _{\mathrm{I}2\mathrm{U}}$ denotes the complex channel gain, $\gamma _{\mathrm{I}2\mathrm{U}}^{\mathrm{A}}$/$\varphi _{\mathrm{I}2\mathrm{U}}^{\mathrm{A}}$ denotes the elevation/azimuth AoA and $\gamma _{\mathrm{I}2\mathrm{U}}^{\mathrm{D}}$/$\varphi _{\mathrm{I}2\mathrm{U}}^{\mathrm{D}}$ denotes the elevation/azimuth AoD from the IRS to the user. In addition, $\mathbf{b}_{\mathrm{U}}\left( \gamma _{\mathrm{I}2\mathrm{U}}^{\mathrm{A}},\varphi _{\mathrm{I}2\mathrm{U}}^{\mathrm{A}} \right) 
$ and $\mathbf{b}_{\mathrm{I}}\left( \gamma _{\mathrm{I}2\mathrm{U}}^{\mathrm{D}},\varphi _{\mathrm{I}2\mathrm{U}}^{\mathrm{D}} \right) $ are the array response vectors for the user and the IRS, respectively, with their elements given by
\begin{align}
&\left[ \mathbf{b}_{\mathrm{U}}\left( \gamma _{\mathrm{I}2\mathrm{U}}^{\mathrm{A}},\varphi _{\mathrm{I}2\mathrm{U}}^{\mathrm{A}} \right) \right] _m=e^{j\left[ m_y\left( m \right) 2\pi \frac{d_{\mathrm{user}}}{\lambda}\cos \left( \gamma _{\mathrm{I}2\mathrm{U}}^{\mathrm{A}} \right) \sin \left( \varphi _{\mathrm{I}2\mathrm{U}}^{\mathrm{A}} \right) +m_z\left( m \right) 2\pi \frac{d_{\mathrm{user}}}{\lambda}\sin \left( \gamma _{\mathrm{I}2\mathrm{U}}^{\mathrm{A}} \right) \right]},
\\
&\left[ \mathbf{b}_{\mathrm{I}}\left( \gamma _{\mathrm{I}2\mathrm{U}}^{\mathrm{D}},\varphi _{\mathrm{I}2\mathrm{U}}^{\mathrm{D}} \right) \right] _l=e^{j\left[ l_y\left( l \right) 2\pi \frac{d_{\mathrm{IRS}}}{\lambda}\cos \left( \gamma _{\mathrm{I}2\mathrm{U}}^{\mathrm{D}} \right) \sin \left( \varphi _{\mathrm{I}2\mathrm{U}}^{\mathrm{D}} \right) +l_z\left( l \right) 2\pi \frac{d_{\mathrm{IRS}}}{\lambda}\sin \left( \gamma _{\mathrm{I}2\mathrm{U}}^{\mathrm{D}} \right) \right]},
\end{align}
where
\begin{align}
&m_y\left( m \right) \triangleq \lfloor \left( m-1 \right) /M_z \rfloor ,m=1,\cdots ,M,
\\
&m_z\left( m \right) \triangleq m-m_y\left( m \right) M_z-1,m=1,\cdots ,M,
\\
&l_y\left( l \right) \triangleq \lfloor \left( l-1 \right) /L_z \rfloor ,l=1,\cdots ,L,
\\
&l_z\left( l \right) \triangleq l-l_y\left( l \right) L_z-1,l=1,\cdots ,L,
\end{align}
$\lambda $ denotes the carrier wavelength, $d_{\mathrm{IRS}}$ and $d_{\mathrm{user}}$ represent the distances between two adjacent reflecting elements of the IRS and two adjacent antennas of the user, respectively. Furthermore, we consider $d_\text{IRS}=d_\text{user}=\frac{\lambda}{2}$.
\par Similarly, the channel from the BS to the IRS can be expressed as
\begin{align}
\mathbf{H}_{\mathrm{B}2\mathrm{I}}=\alpha _{\mathrm{B}2\mathrm{I}}\mathbf{b}_{\mathrm{I}}\left( \gamma _{\mathrm{B}2\mathrm{I}}^{\mathrm{A}},\varphi _{\mathrm{B}2\mathrm{I}}^{\mathrm{A}} \right) \mathbf{a}_{\mathrm{B}}^{\mathrm{H}}\left( \gamma _{\mathrm{B}2\mathrm{I}}^{\mathrm{D}} \right) ,
\end{align}
where $\alpha _{\mathrm{B}2\mathrm{I}}$ denotes the complex channel gain, $\gamma _{\mathrm{B}2\mathrm{I}}^{\mathrm{A}}$/$\varphi _{\mathrm{B}2\mathrm{I}}^{\mathrm{A}}$ represents the elevation/azimuth AoA and $\gamma _{\mathrm{B}2\mathrm{I}}^{\mathrm{D}}$ represents the elevation AoD from the BS to the IRS. In addition, $\mathbf{a}_{\mathrm{B}}\left( \gamma _{\mathrm{B}2\mathrm{I}}^{\mathrm{D}} \right) $ is the array response vector for the BS.

\section{Performance Characterization of the IRS-Aided NO-ISAC System}\label{section3}
\par For the considered IRS-aided NO-ISAC system, its communication  and localization performances depend on the transmission of the information-carrying symbol $x(t)$ and the estimation of the AoA pair $\left( \gamma _{\mathrm{I}2\mathrm{U}}^{\mathrm{A}},\varphi _{\mathrm{I}2\mathrm{U}}^{\mathrm{A}} \right)$, respectively. In general, the communication performance is measured by the channel capacity, while the localization performance is measured by the CRLB. To unify the performance metric of  communication and localization, we propose the communication CRLB metric, which is defined as the CRLB for the unbiased estimator of $x(t)$.  As such, the ISAC process of simultaneous data transmission and AoA estimation is equivalent to the joint estimation of $\gamma _{\mathrm{I}2\mathrm{U}}^{\mathrm{A}}$, $\varphi _{\mathrm{I}2\mathrm{U}}^{\mathrm{A}}$, and $x\left( t \right), t=1,\cdots,T$. Motivated by this, we proposed  a modified CRLB (in dB) metric for characterizing the joint communication and localization performance of the NO-ISAC system 
\begin{align}\label{FEMSE}
\mathrm{CRLB}_\text{ISAC}\triangleq \zeta \lg \left( \mathrm{CRLB}_{\mathrm{x}} \right) +\left( 1-\zeta \right) \lg \left( \mathrm{CRLB}_{\mathrm{angle}} \right) ,
\end{align}
where
\begin{align}
&\mathrm{CRLB}_{\mathrm{x}}\triangleq\sum_{t=1}^T{\mathrm{CRLB}\left( x\left( t \right) \right) /T},
\\
&\mathrm{CRLB}_{\mathrm{angle}}\triangleq \left( \mathrm{CRLB}\left( \gamma _{\mathrm{I}2\mathrm{U}}^{\mathrm{A}} \right) +\mathrm{CRLB}\left( \varphi_{\mathrm{I}2\mathrm{U}}^{\mathrm{A}} \right) \right) /2,
\end{align}
and $\zeta$ denotes the weight coefficient, by adjusting which we can make trade-off between communication performance and localization performance. 
In addition, $\mathrm{CRLB}\left( x\left( t \right) \right) $, $\mathrm{CRLB}\left( \gamma _{\mathrm{I}2\mathrm{U}}^{\mathrm{A}} \right)$, and $\mathrm{CRLB}\left( \varphi _{\mathrm{I}2\mathrm{U}}^{\mathrm{A}} \right) $ denote the CRLBs of $x\left( t \right) $, $\gamma _{\mathrm{I}2\mathrm{U}}^{\mathrm{A}}$, and $ \varphi _{\mathrm{I}2\mathrm{U}}^{\mathrm{A}}$, respectively.
\par By stacking these unknown parameters (i.e., $x\left( t \right) $, $\gamma _{\mathrm{I}2\mathrm{U}}^{\mathrm{A}}$, and $ \varphi _{\mathrm{I}2\mathrm{U}}^{\mathrm{A}}$) to be estimated into a $T+2$ dimensional vector $\bm{\theta}$, we have
\begin{align}
\boldsymbol{\theta }=\left[ x\left( 1 \right) ,\cdots ,x\left( T \right) ,\gamma _{\mathrm{I}2\mathrm{U}}^{\mathrm{A}},\varphi _{\mathrm{I}2\mathrm{U}}^{\mathrm{A}} \right] ^{\mathrm{T}}.
\end{align}
\par Invoking the results in \cite{van2004detection}, we calculate the CRLB of $\left[ \boldsymbol{\theta } \right] _i$ as
\begin{align}
\mathrm{CRLB}\left( \left[ \boldsymbol{\theta } \right] _i \right) =\left[ \mathbf{J}^{-1} \right] _{i,i},i=1,\cdots ,T+2,\label{CRLB}
\end{align}
where $\mathbf{J}\in \mathbb{R} ^{\left( T+2 \right) \times \left( T+2 \right)}$ is the Fisher information matrix (FIM) with respect to $\boldsymbol{\theta }$, whose $(i,j)$-th element is given by
\begin{align}\label{pytheta}
\left[ \mathbf{J} \right] _{i,j}\triangleq-\mathbb{E} \left( \frac{\partial ^2\ln \!\:p\left( \mathbf{y},\boldsymbol{\theta } \right)}{\partial \theta _i\partial \theta _j} \right) ,i,j=1,\cdots ,T+2,
\end{align}
where $p\left( \mathbf{y},\boldsymbol{\theta } \right) $ is the joint probability density function of ${\bf y}$ and ${\bm \theta}$. 
\par Let $p ({\bf x})$ and $p ({\bf y}|{\bf x})$ denote the probability density function of ${\bf x} \triangleq  [x(1),...,x(T)]^\text{T} \in \mathbb{C}^{T \times 1}$ and the conditional probability density function of ${\bf y}$ given ${\bf x}$, respectively.  Then, the joint probability density function $p({\bf y}, {\bm \theta})$ in (\ref{pytheta}) can be expressed as 
\begin{align}
p\left( \mathbf{y},\boldsymbol{\theta } \right) =p\left( \mathbf{x} \right) p\left( \mathbf{y}|\mathbf{x} \right),
\end{align}
based on which, we can decompose $\mathbf{J}$ into two additive parts
\begin{align}
    \mathbf{J}=\mathbf{J}_P+\mathbf{J}_D,\label{J}
\end{align}
where the $(i,j)$-th elements of $\mathbf{J}_P$ and $\mathbf{J}_D$ are respectively defined as
\begin{align}
&[\mathbf{J}_P]_{i,j}\triangleq-\mathbb{E} \left( \frac{\partial ^2\ln \!\:p\left( \mathbf{x} \right)}{\partial \theta _i\partial \theta _j} \right) ,\\
&[\mathbf{J}_D]_{i,j}\triangleq-\mathbb{E} \left( \frac{\partial ^2\ln \!\:p\left( \mathbf{y}|\mathbf{x} \right)}{\partial \theta _i\partial \theta _j} \right) .
\end{align}
\par By calculating $\mathbf{J}_P$ and $\mathbf{J}_D$, we can obtain the FIM $\mathbf{J}$, the CRLB of $\bm \theta$, and the modified CRLB,  as summarized in the following theorem.
\begin{theorem}
\par The FIM with respect to $\boldsymbol{\theta }$ is given by
\begin{align}\label{Jexpress}
\mathbf{J}=\frac{2}{\sigma _{z}^{2}}\left[ \begin{matrix}
	\left\| \boldsymbol{\beta }_x \right\| ^2&		&		&		\mathcal{R} \!\left\{ \boldsymbol{\beta }_{x}^{\mathrm{H}}\boldsymbol{\beta }_{\gamma ,1} \right\}&		\mathcal{R} \!\left\{ \boldsymbol{\beta }_{x}^{\mathrm{H}}\boldsymbol{\beta }_{\varphi ,1} \right\}\\
	&		\ddots&		&		\vdots&		\vdots\\
	&		&		\left\| \boldsymbol{\beta }_x \right\| ^2&		\mathcal{R} \!\left\{ \boldsymbol{\beta }_{x}^{\mathrm{H}}\boldsymbol{\beta }_{\gamma ,T} \right\}&		\mathcal{R} \!\left\{ \boldsymbol{\beta }_{x}^{\mathrm{H}}\boldsymbol{\beta }_{\varphi ,T} \right\}\\
	\mathcal{R} \!\left\{ \boldsymbol{\beta }_{x}^{\mathrm{H}}\boldsymbol{\beta }_{\gamma ,1}\! \right\}&		\cdots&		\mathcal{R} \!\left\{ \boldsymbol{\beta }_{x}^{\mathrm{H}}\boldsymbol{\beta }_{\gamma ,T}\! \right\}&		\sum_{t=1}^T{\!\left\| \boldsymbol{\beta }_{\gamma ,t} \right\| ^2}&		\sum_{t=1}^T{\!\mathcal{R} \!\left\{ \boldsymbol{\beta }_{\gamma ,t}^{\mathrm{H}}\boldsymbol{\beta }_{\varphi ,t}\! \right\}}\\
	\mathcal{R} \!\left\{ \boldsymbol{\beta }_{x}^{\mathrm{H}}\boldsymbol{\beta }_{\varphi ,1}\! \right\}&		\cdots&		\mathcal{R} \!\left\{ \boldsymbol{\beta }_{x}^{\mathrm{H}}\boldsymbol{\beta }_{\varphi ,T}\! \right\}&		\sum_{t=1}^T{\!\mathcal{R} \!\left\{ \boldsymbol{\beta }_{\varphi ,t}^{\mathrm{H}}\boldsymbol{\beta }_{\gamma ,t}\! \right\}}&		\sum_{t=1}^T{\left\| \boldsymbol{\beta }_{\varphi ,t} \right\| ^2\!}\\
\end{matrix} \right] ,
\end{align}
where
\begin{align}
&\boldsymbol{\beta }_x=\mathbf{H}_{\mathrm{I}2\mathrm{U}}\mathbf{\Theta H}_{\mathrm{B}2\mathrm{I}}\mathbf{w},\label{betax}
\\
&\boldsymbol{\beta }_{\gamma ,t} =\mathbf{H}_{\mathrm{I}2\mathrm{U},\gamma}\mathbf{\Theta H}_{\mathrm{B}2\mathrm{I}}\mathbf{w}x\left( t \right) \label{betagammat},
\\
&\boldsymbol{\beta }_{\varphi ,t}=\mathbf{H}_{\mathrm{I}2\mathrm{U},\varphi}\mathbf{\Theta H}_{\mathrm{B}2\mathrm{I}}\mathbf{w}x\left( t \right) \label{betaphit},
\\
&\left[ \mathbf{H}_{\mathrm{I}2\mathrm{U},\varphi} \right] _{m,l}=j\pi \left( m_y\left( m \right) -l_y\left( l \right) \right) \cos \left( \gamma _{\mathrm{I}2\mathrm{U}}^{\mathrm{A}} \right) \cos \left( \varphi _{\mathrm{I}2\mathrm{U}}^{\mathrm{A}} \right) \left[ \mathbf{H}_{\mathrm{I}2\mathrm{U}} \right] _{m,l},
\\
&\left[ \mathbf{H}_{\mathrm{I}2\mathrm{U},\gamma} \right] _{m,l}=j\pi \left( m_z\left( m \right) -l_z\left( l \right) \right) \cos \left( \gamma _{\mathrm{I}2\mathrm{U}}^{\mathrm{A}} \right) \left[ \mathbf{H}_{\mathrm{I}2\mathrm{U}} \right] _{m,l}\notag
\\
&\qquad\qquad\qquad-j\pi \left( m_y\left( m \right) -l_y\left( l \right) \right) \sin \left( \gamma _{\mathrm{I}2\mathrm{U}}^{\mathrm{A}} \right) \sin \left( \varphi _{\mathrm{I}2\mathrm{U}}^{\mathrm{A}} \right) \left[ \mathbf{H}_{\mathrm{I}2\mathrm{U}} \right] _{m,l}.
\end{align}
\par The CRLB of $x\left( t \right) $, $\gamma _{\mathrm{I}2\mathrm{U}}^{\mathrm{A}}$, and $\varphi _{\mathrm{I}2\mathrm{U}}^{\mathrm{A}}$ are respectively given by
\begin{align}
&\mathrm{CRLB}\left( x\left( t \right) \right) =\left[ \mathbf{J}^{-1} \right] _{t,t},t=1,\cdots ,T,
\\
&\mathrm{CRLB}\left( \gamma _{\mathrm{I}2\mathrm{U}}^{\mathrm{A}} \right) =\left[ \mathbf{J}^{-1} \right] _{T+1,T+1},
\\
&\mathrm{CRLB}\left( \varphi _{\mathrm{I}2\mathrm{U}}^{\mathrm{A}} \right) =\left[ \mathbf{J}^{-1} \right] _{T+2,T+2}.
\end{align}

The modified CRLB for the IRS-aided NO-ISAC system is given by
\begin{align}
\mathrm{CRLB}_{\mathrm{ISAC}}\triangleq \zeta \lg \left( \sum_{t=1}^T{\left[ \mathbf{J}^{-1} \right] _{t,t}/T} \right) +\left( 1-\zeta \right) \lg \left( \sum_{t=T+1}^{T+2}{\left[ \mathbf{J}^{-1} \right] _{t,t}/2} \right).
\end{align}
\end{theorem}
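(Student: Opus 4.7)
The plan is to compute the two summands $\mathbf{J}_D$ and $\mathbf{J}_P$ of the decomposition in~\eqref{J} in turn, and then read the three CRLBs off the diagonal of $\mathbf{J}^{-1}$ and assemble the modified CRLB via definition~\eqref{FEMSE}. The key observation is that, conditioned on $\mathbf{x}$, the stacked observation $\mathbf{y}$ is complex Gaussian with parameter-independent covariance $\sigma_z^2\mathbf{I}_{MT}$ and mean $\boldsymbol{\mu}(\boldsymbol{\theta})\triangleq[\mathbf{h}_x(1)^{\mathrm{T}},\cdots,\mathbf{h}_x(T)^{\mathrm{T}}]^{\mathrm{T}}$. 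Hence the complex Slepian--Bangs identity yields $[\mathbf{J}_D]_{i,j}=(2/\sigma_z^2)\mathcal{R}\{(\partial\boldsymbol{\mu}/\partial\theta_i)^{\mathrm{H}}(\partial\boldsymbol{\mu}/\partial\theta_j)\}$, and the whole problem reduces to evaluating the $T+2$ partial derivatives of $\boldsymbol{\mu}$. Since the angles do not enter $p(\mathbf{x})$ and the $x(t)$ are i.i.d., $\mathbf{J}_P$ is at most a diagonal shift on the $x$-block; I will verify that, under the ``unbiased estimator of $x(t)$'' viewpoint adopted just before~\eqref{FEMSE}, this shift does not alter the structure of~\eqref{Jexpress} (equivalently, it corresponds to the uninformative-prior limit of the Bayesian CRLB), so that $\mathbf{J}=\mathbf{J}_D$.

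Next I will compute $\partial\boldsymbol{\mu}/\partial\theta_i$ for each parameter. For $\theta_i=x(t)$, only the $t$-th length-$M$ sub-block of $\boldsymbol{\mu}$ carries $x(t)$ as a linear factor, so the derivative equals $\boldsymbol{\beta}_x$ in that block and zero elsewhere; this immediately yields the $\|\boldsymbol{\beta}_x\|^2$ diagonal and the absence of cross terms between distinct symbols. For $\theta_i\in\{\gamma_{\mathrm{I}2\mathrm{U}}^{\mathrm{A}},\varphi_{\mathrm{I}2\mathrm{U}}^{\mathrm{A}}\}$, the angles are embedded in the array-response exponents of $\mathbf{b}_{\mathrm{U}}$ and $\mathbf{b}_{\mathrm{I}}^{\mathrm{H}}$ inside~\eqref{S2}; differentiating element-wise and using $d_{\mathrm{IRS}}=d_{\mathrm{user}}=\lambda/2$ reproduces the closed forms of $\mathbf{H}_{\mathrm{I}2\mathrm{U},\gamma}$ and $\mathbf{H}_{\mathrm{I}2\mathrm{U},\varphi}$ stated in the Theorem. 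Multiplying by $\boldsymbol{\Theta}\mathbf{H}_{\mathrm{B}2\mathrm{I}}\mathbf{w}\,x(t)$, which is unaffected by the angular derivative because $\mathbf{H}_{\mathrm{B}2\mathrm{I}}$ does not depend on these parameters, then delivers $\boldsymbol{\beta}_{\gamma,t}$ and $\boldsymbol{\beta}_{\varphi,t}$ in the $t$-th sub-block.

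Substituting these derivatives back into the Slepian--Bangs identity and collecting blocks reproduces~\eqref{Jexpress} directly: the upper-left $T\times T$ diagonal has entries $\|\boldsymbol{\beta}_x\|^2$ with no off-diagonal because distinct symbols occupy disjoint time sub-blocks; the symbol--angle cross terms read $\mathcal{R}\{\boldsymbol{\beta}_x^{\mathrm{H}}\boldsymbol{\beta}_{\gamma,t}\}$ and $\mathcal{R}\{\boldsymbol{\beta}_x^{\mathrm{H}}\boldsymbol{\beta}_{\varphi,t}\}$; and the $2\times 2$ angle--angle corner sums over $t$ because both angles enter every time sub-block. The per-parameter CRLBs then follow from~\eqref{CRLB} by selecting the $t$-th, $(T{+}1)$-th and $(T{+}2)$-th diagonal entries of $\mathbf{J}^{-1}$, and the final display of the Theorem is just the substitution of these entries into~\eqref{FEMSE}. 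I expect the main obstacle to be the bookkeeping of the complex parameter $x(t)$ against the real parameters $\gamma_{\mathrm{I}2\mathrm{U}}^{\mathrm{A}},\varphi_{\mathrm{I}2\mathrm{U}}^{\mathrm{A}}$: one must split $x(t)$ into its real and imaginary parts (or equivalently use Wirtinger calculus) to apply the real-valued CRLB machinery, and it is precisely this split that produces the $\mathcal{R}\{\cdot\}$ operators in the cross terms of~\eqref{Jexpress} and fixes the overall scaling factor $2/\sigma_z^2$.
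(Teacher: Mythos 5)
Your route is the paper's own: split $\mathbf{J}=\mathbf{J}_P+\mathbf{J}_D$ as in \eqref{J}, evaluate $\mathbf{J}_D$ with the Slepian--Bangs formula for a Gaussian likelihood whose covariance $\sigma_z^2\mathbf{I}_{MT}$ is parameter-free, compute the three families of derivatives of the stacked mean ($\boldsymbol{\beta}_x$ for each symbol, $\boldsymbol{\beta}_{\gamma,t}$ and $\boldsymbol{\beta}_{\varphi,t}$ for the angles via element-wise differentiation of the array-response exponents with $d_{\mathrm{IRS}}=d_{\mathrm{user}}=\lambda/2$), and assemble the blocks; the CRLB and modified-CRLB displays are then purely definitional, exactly as in Appendix~A. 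The one place you diverge is the prior term: you leave $\mathbf{J}_P$ as a ``diagonal shift to be verified harmless'' and offer the uninformative-prior limit as a fallback justification, whereas the paper keeps $\sigma_x^2$ finite and shows $\mathbf{J}_P=\mathbf{0}$ outright by the one-line Wirtinger computation $\partial^2\ln p(\mathbf{x})/\partial x(t)^2=-\sigma_x^{-2}\,\partial\left( x(t)-\mu_x \right)^*/\partial x(t)=0$, the conjugate being held constant. Be aware that a genuinely nonzero diagonal shift \emph{would} change \eqref{Jexpress} --- the $x$-block diagonal would acquire an additive $\sigma_x^{-2}$-type term alongside $2\left\| \boldsymbol{\beta}_x \right\|^2/\sigma_z^2$ --- so you need the exact vanishing, not merely ``structure preservation''; once that single computation is supplied, your argument coincides with the paper's.
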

\begin{proof}
See Appendix A.
\end{proof}
\par Let $I(t)$ denote the mutual information between $x(t)$ and its estimator $\hat{x}(t)$. The relationship between the communication CRLB  $\mathrm{CRLB}(x(t))$ and the mutual information  $I(t)$ is provided by the following corollary.
\begin{corollary}
The mutual information $I(t)$ is given by
\begin{align}
I\left( t \right) =\frac{1}{2}\log \left( \frac{\sigma _{x}^{2}}{\mathrm{CRLB}\left( x\left( t \right) \right)} \right) .
\end{align}
\end{corollary}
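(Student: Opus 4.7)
The plan is to recognize Corollary~1 as the classical identity relating mutual information, prior variance, and minimum mean-squared error for a Gaussian signal observed through a Gaussian channel, and to connect $\mathrm{CRLB}(x(t))$ supplied by Theorem~1 to that MMSE.

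First, I would argue that the quantity $\mathrm{CRLB}(x(t)) = [\mathbf{J}^{-1}]_{t,t}$ is in fact a \emph{Bayesian} Cramér-Rao bound. The decomposition $\mathbf{J} = \mathbf{J}_P + \mathbf{J}_D$ used in the proof of Theorem~1 already folds the Gaussian prior on $x(t)$ into the Fisher information through $\mathbf{J}_P$, while $\mathbf{J}_D$ encodes the likelihood $p(\mathbf{y}\mid\mathbf{x})$. Because the observation model $\mathbf{y}(t) = \boldsymbol{\beta}_x x(t) + \mathbf{z}(t)$ is linear in $x(t)$ with jointly Gaussian signal and noise, the linear MMSE estimator $\hat{x}(t)$ is the true conditional mean, the estimation error $x(t) - \hat{x}(t)$ is Gaussian, and the Bayesian CRLB is attained with equality by the MMSE. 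Hence $\mathrm{CRLB}(x(t)) = \mathbb{E}\{|x(t) - \hat{x}(t)|^{2}\}$.

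Second, I would carry out the usual differential-entropy computation. Write $I(t) = h(x(t)) - h(x(t)\mid \hat{x}(t))$, which coincides with $I(x(t);\mathbf{y})$ by sufficiency of $\hat{x}(t)$ for $x(t)$. Under the Gaussian prior $x(t)\sim\mathcal{CN}(\mu_{x},\sigma_{x}^{2})$ we have $h(x(t)) = \tfrac{1}{2}\log(2\pi e\,\sigma_{x}^{2})$, and because the conditional error is Gaussian with variance equal to the MMSE identified above, $h(x(t)\mid\hat{x}(t)) = \tfrac{1}{2}\log(2\pi e\,\mathrm{CRLB}(x(t)))$. The $2\pi e$ factors cancel, yielding the claimed $\tfrac{1}{2}\log(\sigma_{x}^{2}/\mathrm{CRLB}(x(t)))$.

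The main technical subtlety I anticipate lies in the Bayesian-CRLB step: $[\mathbf{J}^{-1}]_{t,t}$ arises from inverting the full $(T+2)\times(T+2)$ FIM in equation~(24), so it couples the symbol block with the two nuisance-angle rows and columns through Schur complementation. One must verify that, after this block inversion, what remains is still precisely the conditional variance of $x(t)$ given $\mathbf{y}$. This is plausible because, conditional on the channel parameters, the observation is a linear Gaussian function of the symbol vector and the angle nuisances enter only through $\boldsymbol{\beta}_{\gamma,t}$ and $\boldsymbol{\beta}_{\varphi,t}$, but making the reduction rigorous is where most of the real work sits; once it is settled, the remainder of the proof is a substitution.
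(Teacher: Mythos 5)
Your proof is essentially the paper's own: both write $I(t)=H(x(t))-H(x(t)\mid\hat{x}(t))$, take the conditional law of $x(t)$ given $\hat{x}(t)$ to be Gaussian with variance $\mathrm{CRLB}(x(t))$, and cancel the additive entropy constants. The only substantive divergence is in your justification of that middle step: you assert that the prior enters the FIM through $\mathbf{J}_P$, making $[\mathbf{J}^{-1}]_{t,t}$ a Bayesian bound attained by the MMSE estimator, whereas Appendix~A of the paper computes $\mathbf{J}_P=\mathbf{0}_{(T+2)\times(T+2)}$ (a consequence of its Wirtinger-style derivative $\partial^2\ln p(\mathbf{x})/\partial x^2(t)=0$), so the quantity being inverted is in fact the data-only FIM and your identification of $[\mathbf{J}^{-1}]_{t,t}$ with the posterior variance does not follow from the route you describe. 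That said, the paper itself simply asserts $(x(t)\mid\hat{x}(t))\sim\mathcal{CN}(\hat{x}(t),\mathrm{CRLB}(x(t)))$ without proof, so the ``subtlety'' you flag at the end -- that the Schur complement of the full $(T+2)$-dimensional FIM must be shown to equal the conditional variance of $x(t)$ -- is precisely the gap the published proof also leaves open; your write-up is at least honest about where the real work sits, and the final cancellation yielding $\tfrac{1}{2}\log(\sigma_x^2/\mathrm{CRLB}(x(t)))$ is identical in both.
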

\begin{proof}
The mutual information between $x(t)$ and $\hat{x}(t)$ can be expressed as
\begin{align}
I\left( t \right) =H\left( x\left( t \right) \right) -H\left( x\left( t \right) |\hat{x}\left( t \right) \right),
\end{align}
where $H\left( x\left( t \right) \right) $ denotes the entropy of $x\left( t \right) $ and $H\left( x\left( t \right) |\hat{x}\left( t \right) \right) $ denotes the conditional entropy of $x\left( t \right) $ given the estimated signal $\hat{x}\left( t \right) $. Note that, with the given estimated signal $\hat{x}\left( t \right) $, the conditional probability distribution of $x\left( t \right) $ can be expressed as
\begin{align}
\left( x\left( t \right) |\hat{x}\left( t \right) \right) \sim \mathcal{C} \mathcal{N} \left( \hat{x}\left( t \right) ,\mathrm{CRLB}\left( x\left( t \right) \right) \right).
\end{align}
According to \cite{AdaliComplex}, the entropy of a complex random variable is defined as the entropy of its real composite. Therefore, the mutual information $I(t)$ can be calculated as
\begin{align}
I(t)&=H\left( \mathcal{R} \left\{ x\left( t \right) \right\} \right) -H\left( \mathcal{R} \left\{ x\left( t \right) |\hat{x}\left( t \right) \right\} \right) 
\\
&=\frac{1}{2}\log \pi e\sigma _{x}^{2}-\frac{1}{2}\log \pi e\mathrm{CRLB}\left( x\left( t \right) \right)\notag 
\\
&=\frac{1}{2}\log \left( \frac{\sigma _{x}^{2}}{\mathrm{CRLB}\left( x\left( t \right) \right)} \right). \notag
\end{align}
\end{proof}

\section{CRLB-based Beamforming Design}\label{section4}
In this section, by exploiting the modified CRLB, we propose a joint design of active beamforming at the BS and passive beamforming at the IRS, with both communication and localization performances taken into account.
\par As such, we formulate the CRLB minimization problem as
\begin{subequations}
\begin{align}
\text {(P1)}:\min_{\mathbf{w},\boldsymbol{\xi }}&\quad \mathrm{CRLB}_{\mathrm{ISAC}} \\
\text{s.t.}&\quad\left\| \mathbf{w} \right\| ^2\leqslant P_t,\\
&\quad\,\,\vartheta _l\in \mathcal{F} ,\forall l=1,\cdots ,L.
\end{align}
\end{subequations}
By invoking the results in Theorem 1, the objective function can be expressed as 
\begin{align}
\mathrm{CRLB}_{\mathrm{ISAC}}&=\zeta \lg \left( \frac{\sum_{t=1}^T{\left[ \mathbf{J}_{\xi}^{-1} \right] _{t,t}}}{T\left| \mathbf{a}_{\mathrm{B}}^{\mathrm{H}}\left( \gamma _{\mathrm{B}2\mathrm{I}}^{\mathrm{D}} \right) \mathbf{w} \right|^2} \right) +\left( 1-\zeta \right) \lg \left( \frac{\sum_{t=T+1}^{T+2}{\left[ \mathbf{J}_{\xi}^{-1} \right] _{t,t}}}{2\left| \mathbf{a}_{\mathrm{B}}^{\mathrm{H}}\left( \gamma _{\mathrm{B}2\mathrm{I}}^{\mathrm{D}} \right) \mathbf{w} \right|^2} \right) 
\\
&=V_{\xi}-2\lg \left| \mathbf{a}_{\mathrm{B}}^{\mathrm{H}}\left( \gamma _{\mathrm{B}2\mathrm{I}}^{\mathrm{D}} \right) \mathbf{w} \right|,\notag
\end{align}
where
\begin{align}
&V_{\xi}=\lg \left( \sum_{t=1}^T{\left[ \mathbf{J}_{\xi}^{-1} \right] _{t,t}}/T \right) +\left( 1-\zeta \right) \lg \left( \sum_{t=T+1}^{T+2}{\left[ \mathbf{J}_{\xi}^{-1} \right] _{t,t}}/2 \right),\label{V}\\
&\mathbf{J}_{\boldsymbol{\xi }}=\frac{\mathbf{J}}{\left| \mathbf{a}_{\mathrm{B}}^{\mathrm{H}}\left( \gamma _{\mathrm{B}2\mathrm{I}}^{\mathrm{D}} \right) \mathbf{w} \right|^2}.
\end{align}
Noticing that $V_{\xi}$ is independent of ${\bf w}$, we can decompose the problem (P1) into two uncoupled subproblems with
respect to ${\bf w}$ and ${\bm \xi}$, respectively.

\subsection{BS Active Beamforming}
The subproblem with respect to the BS beamforming vector ${\bf w}$ can be formulated as
\begin{subequations} 
\begin{align}
\text {(P2)}:\max_{\mathbf{w}} &\quad \left| \mathbf{a}_{\mathrm{B}}^{\mathrm{H}}\left( \gamma _{\mathrm{B}2\mathrm{I}}^{\mathrm{D}} \right) \mathbf{w} \right|^2 \\
\text{s.t.}&\quad\left\| \mathbf{w} \right\| ^2\leqslant P_t.
\end{align}
\end{subequations}
It can be easily verified that the optimal solution is
\begin{align}\label{wopt}
\mathbf{w}=\sqrt{\frac{P_t}{N_t}}\mathbf{a}_{\mathrm{B}}\left( \gamma _{\mathrm{B}2\mathrm{I}}^{\mathrm{D}} \right).
\end{align}
\subsection{IRS Passive Beamforming}
The optimization subproblem with respect to the IRS phase shift beam ${\bm \xi}$ can be formulated as
\begin{align}
\text {(P3)}:\min_{\boldsymbol{\xi }}&\quad V_{\xi} \\
\text{s.t.}&\quad\vartheta _l\in \mathcal{F},\forall l=1,\cdots ,L.
\end{align}
\par Due to the complicated expression of the objective function as well as the non-convex phase shift constraints, the problem (P3) is difficult to solve. Responding to this, we propose a CE-based method to optimize $\boldsymbol{\xi }$.
\par First, we set the probability matrix corresponding to $\boldsymbol{\xi }$ as $\mathbf{P}=\left[ \mathbf{p}_1,\cdots ,\mathbf{p}_l,\cdots ,\mathbf{p}_L \right] \in \mathbb{C} ^{2^b\times L}$, where $\mathbf{p}_l=\left[ p_{l,1},\cdots ,p_{l,s},\cdots ,p_{l,2^b} \right] ^T$
denotes the probability parameter for $\vartheta _l$, with its entry $p_{l,s}$ satisfying the probability constraints $0\leqslant p_{l,s}\leqslant 1$ and $\sum_{s=1}^{2^b}{p_{l,s}}=1$. Subsequently, we consider that $\vartheta _l$ takes a value from $\mathcal{F} $ with an equal probability at first, and initialize the  probability matrix as $\mathbf{P}^0=\frac{1}{2^b}\times \mathbf{1}_{2^b\times L}$. Then, in the $i$-th iteration, we randomly generate $C$ candidates $\left\{ \boldsymbol{\xi }^c \right\} _{c=1}^{C}$ according to the probability distribution function given by
\begin{align} 
\Xi \left( \boldsymbol{\xi };\mathbf{P}^i \right) =\prod_{l=1}^L{\left( \left( \prod_{s=1}^{2^b-1}{\left( p_{l,s}^{i} \right) ^{\varGamma \left( \vartheta _l,\mathcal{F} \left( s \right) \right)}} \right) \times \left( 1-\prod_{s=1}^{2^b-1}{\left( p_{l,s}^{i} \right) ^{\varGamma \left( \vartheta _l,\mathcal{F} \left( s \right) \right)}} \right) \right)},
\end{align}
where $\mathcal{F} \left( s \right) $ denotes the $c$-th entry of $\mathcal{F}$, and $\varGamma \left( \vartheta _l,\mathcal{F} \left( s \right) \right) $ is a judge function  given by
\begin{align} 
\varGamma \left( \vartheta _l,\mathcal{F} \left( s \right) \right) =\begin{cases}
	1,\vartheta _l=\mathcal{F} \left( s \right)\\
	0,\vartheta _l\ne \mathcal{F} \left( s \right)\\
\end{cases}.
\end{align}
For each phase shift beam candidate $\boldsymbol{\xi }^c$, we calculate the corresponding $V_{\xi}\left( \boldsymbol{\xi }^c \right) $ by (\ref{V}).
\par After that, we sort $\mathcal{V} \!=\!\left\{ V_{\xi}\left( \boldsymbol{\xi }^c \right) \right\} _{c=1}^{C}$ in ascending order and select $C_{\mathrm{elite}}$ phase shift beam samples corresponding to the $C_{\mathrm{elite}}$ smallest modified CRLBs in $\mathcal{V} $, i.e., $\boldsymbol{\xi }^{c_q},c_q=1,\cdots ,C_{\mathrm{elite}}$. Based on the selected $C_{\mathrm{elite}}$ samples, we update the probability matrix in the $(i+1)$-th iteration as $\mathbf{P}^{i+1}$ with its elements given by
\begin{align} \label{BF2}
p_{l,s}^{i+1}=\frac{1}{C_{\mathrm{elite}}}\sum_{c_q=1}^{C_{\mathrm{elite}}}{\varGamma \left( \vartheta _{l}^{c_q},\mathcal{F} \left( s \right) \right)},l=1,\cdots ,L,\,s=1,\cdots ,2^b.
\end{align}
\par Repeat the above process until the difference between the maximum and minimum of $\mathcal{V} $ is less than the threshold $\kappa$, which indicates that the probability matrix corresponding to $\boldsymbol{\xi }$ is stable. The procedure of the beamforming algorithm is summarized in Algorithm 1.
\begin{remark}
The computational complexity of Algorithm 1 is mainly composed of the number of generated candidates per iteration $C$ and the calculation of the FIM matrix $\mathbf{J}$, where the complexity of calculating $\mathbf{J}$ mainly comes from $\boldsymbol{\beta }_x$ in (\ref{betax}), $\boldsymbol{\beta }_{\gamma ,t}$ in (\ref{betagammat}) and $\boldsymbol{\beta }_{\varphi ,t}$ in (\ref{betaphit}). As such, the complexity of Algorithm 1 is $\mathcal{O} \left( L^3\left( 2T+1 \right) C \right) $.
\end{remark}
\begin{algorithm}[htp]
\caption{CE-based Beamforming Algorithm}
\hspace*{0.02in}  
\begin{algorithmic}[1]
\State Initialize $\mathcal{F} =\left\{ 0,\frac{2\pi}{2^b},\cdots,\frac{2\pi}{2^b}\left( 2^b-1 \right) \right\} $, $\mathbf{P}^0=\frac{1}{2^b}\times \mathbf{1}_{2^b\times L}$,
 and $i=1$.
\State Calculate $\mathbf{w}$ based on (\ref{wopt}).
\Repeat
    \State Randomly generate $C$ candidates $\left\{ \boldsymbol{\xi }^c \right\} _{c=1}^{C}$ based on $\Xi \left( \boldsymbol{\xi };\mathbf{P}^i \right)$.
    \State Calculate the modified mean square error $V_{\xi}\left( \boldsymbol{\xi }^c \right) $ by (\ref{V}).
    \State Sort $\mathcal{V} \!=\!\left\{ V_{\xi}\left( \boldsymbol{\xi }^c \right) \right\} _{c=1}^{C}$ in ascending order.
    \State Select $C_{\mathrm{elite}}$ phase shift beam samples corresponding to the $C_{\mathrm{elite}}$ smallest modified CRLBs in $\mathcal{V} $.
    \State Update $\mathbf{P}^{i+1}$ according to (\ref{BF2}).
    \State Let $i \rightarrow i+1 $.
\Until $\left| \max \left( \mathcal{V}  \right) -\min \left( \mathcal{V}  \right) \right|<\kappa $
\end{algorithmic}
\hspace*{0.02in} {\bf Output:}
$\boldsymbol{\xi }^{\mathrm{opt}}=\boldsymbol{\xi }^1$,  $\mathbf{w}^{\mathrm{opt}}=\mathbf{w}$.
\end{algorithm}

\section{Numerical Results}\label{section5}
This section presents numerical results to investigate the performance of the proposed IRS-aided NO-ISAC system as well as to verify the effectiveness of the proposed beamforming algorithm. The simulation setup is shown in Fig.~\ref{setup}, where the user is on the horizontal floor, the BS is $10$ meters (m) above the horizontal floor, and the IRS is $5$ m above the horizontal floor. The distances from the BS to the IRS and from the IRS to the user are set to be $d_{\mathrm{B}2\mathrm{I}}=30$ m and $d_{\mathrm{I}2\mathrm{U}}=10 $ m, respectively. The path loss exponents from the BS to the IRS and from the IRS to the user are set as $2.3$ and $2.2$, respectively. The path loss at the reference distance of $1$ m is set as 30 dB. The received SNR at the user is defined as $SNR\triangleq \frac{\left| \alpha _{\mathrm{B}2\mathrm{I}}\alpha _{\mathrm{I}2\mathrm{U}} \right|^2}{\sigma _{z}^{2}}$. Unless otherwise specified, the following setups are adopted: $SNR=0$ dB, $N_{t}=8$, $L=4\times4$, $M=4\times4$, $P_t=1$, $T=2$, $\zeta=0.5$, $b=2$, $C=5\times L=80$, $C_{\mathrm{elite}}=C/10=8$, and $\kappa =10^{-3}$.
\begin{figure}[htb]
  \centering
  \includegraphics[width=4in]{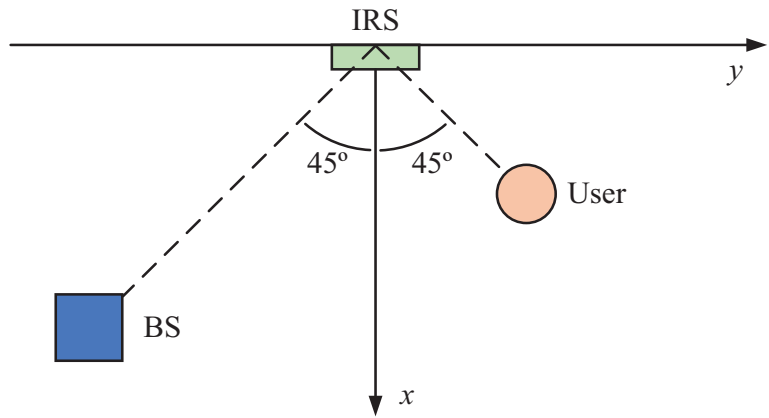}
  \caption{Simulation setup (top view).}
  \label{setup}
\end{figure}

\subsection{IRS-Aided NO-ISAC System vs. IRS-Aided Localization System}
\par In this subsection, we compare the performance of the IRS-aided NO-ISAC system with that of the IRS-aided localization system. For the former, the communication signals sent by the BS are used for both data transmission and localization, while for the latter, the positioning reference signals sent by the BS are used only for localization and known to the user. We adopt the CRLB for angle estimation to evaluate the localization performance, and based on Corollary 1, the communication performance of the IRS-aided NO-ISAC system is measured by the average mutual information defined as
\begin{align}
I_{\mathrm{NO}-\mathrm{ISAC}}\triangleq \frac{1}{T}\sum_{t=1}^T{I\left( t \right)}\label{IISAC},
\end{align}
\begin{figure}[htb]
  \centering
  \includegraphics[width=4in]{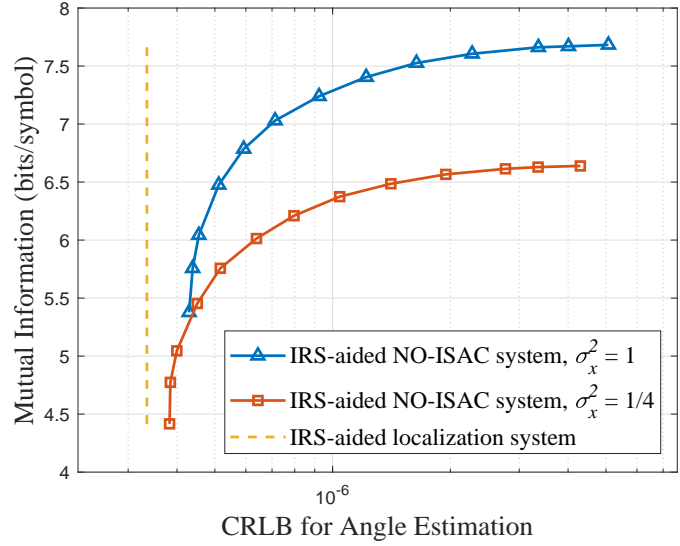}
  \caption{Performance comparison of the NO-ISAC system and the localization system.}
  \label{IvsL}
\end{figure}
\par Fig.~\ref{IvsL} compares the performance of the IRS-aided NO-ISAC system and the IRS-aided localization system with different variances of $x(t)$, where we set $T=5$. The localization performance of the IRS-aided NO-ISAC system is worse than that of the IRS-aided localization system, due to the randomness of the communication signal and the requirement of allocating partial spatial resources for communication to achieve simultaneous communication and localization. Despite the loss of location sensing performance, the IRS-aided NO-ISAC system can realize concurrent communication and localization, while the IRS-aided localization system only has the location sensing function. 
For example, although the location sensing accuracy of the IRS-aided NO-ISAC system with $\sigma_x^2=1$ is $2$ times lower than that of the IRS-aided localization system, the communication mutual information of $7$ bits/symbol can be achieved in the IRS-aided NO-ISAC system. Moreover, we can observe that increasing the variance of the communication signal is adverse for localization while beneficial for communication.
\begin{figure}[htb]
  \centering
  \includegraphics[width=4in]{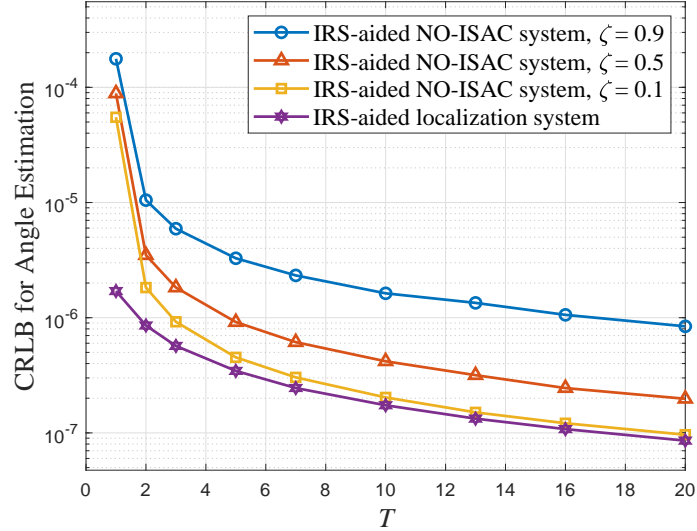}
  \caption{CRLB for angle estimation versus $T$.}
  \label{sigma_T}
\end{figure}
\par Fig.~\ref{sigma_T} presents the CRLB for angle estimation versus the number of time slots (per time block) $T$ with different weight coefficient $\zeta$. It is obvious that the localization performance gap between the two systems becomes smaller, as the weight coefficient $\zeta$ decreases, indicating that allocating more spatial resources for localization could effectively improve the location sensing accuracy of the IRS-aided NO-ISAC system. Moreover, when allocating sufficient spatial resources for localization (i.e., $\zeta$ is enough small), the performance gap between the two systems gradually vanishes with the increase of the number of time slots (per time block) $T$. This is because collecting more data for localization helps suppress the adverse effect of the randomness caused by the communication signal.
\begin{figure}[htb]
  \centering
  \includegraphics[width=4in]{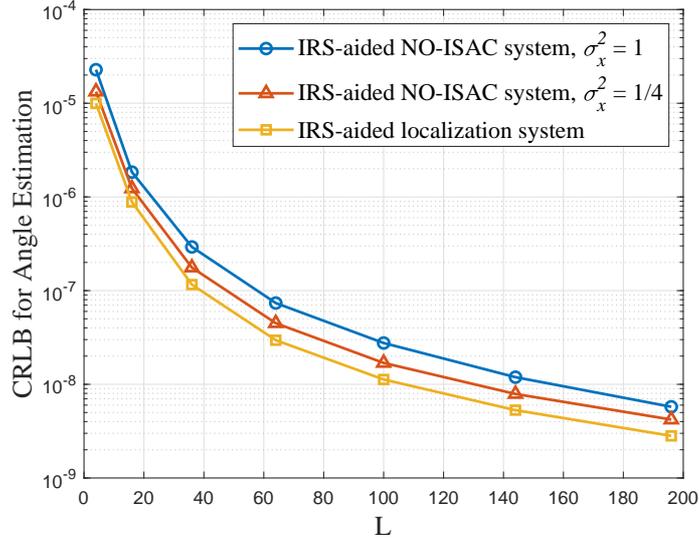}
  \caption{CRLB for angle estimation versus $L$.}
  \label{sigma_L}
\end{figure}
\par Fig.~\ref{sigma_L} shows the CRLB for angle estimation versus the number of IRS elements $L$ with different variances of $x(t)$, where we set $\zeta=0.1$. As can be seen, the localization performance of  both the IRS-aided NO-ISAC system and the IRS-aided localization system improves significantly as the number of IRS elements increases, especially in the case of a small number of IRS elements, demonstrating the advantage of using the IRS for localization.
\subsection{IRS-Aided NO-ISAC System vs. IRS-Aided TD-ISAC System}
\begin{figure}[htb]
  \centering
  \includegraphics[width=5in]{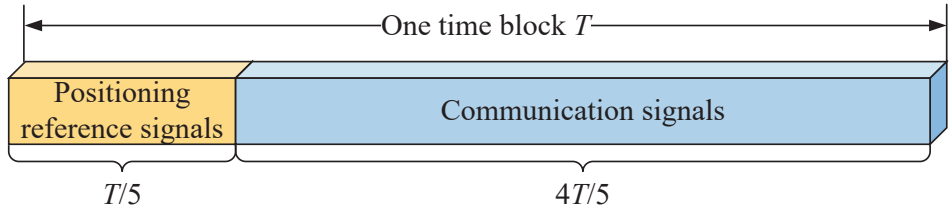}
  \caption{Transmission protocol of the IRS-aided TD-ISAC system in each time block.}
   \label{SSAC_simu}
\end{figure}
\par In this subsection, we compare the performance of the IRS-aided NO-ISAC system and the IRS-aided TD-ISAC system.
For the IRS-aided NO-ISAC system, localization and data transmission are conducted simultaneously by sending communication signals from the BS to the user during the whole time block. For the IRS-aided TD-ISAC system, as illustrated in Fig.~\ref{SSAC_simu}, localization is conducted by sending positioning reference signals from the BS to the user in the first $T/5$ time slots of each time block, while data transmission is conducted by sending communication signals in the remaining  time slots. Therefore, for communication performance comparison, the average mutual information of the IRS-aided TD-ISAC system is defined as
\begin{align}
&I_{\mathrm{TD-ISAC}}\triangleq\frac{1}{T}\sum_{t=\frac{T}{5}}^T{I\left( t \right)}.
\end{align}
\begin{figure}[htb]
  \centering
  \subfigure[Communication performance.]
  {
  \label{IVSD_T_C}
  \includegraphics[width=3.5in]{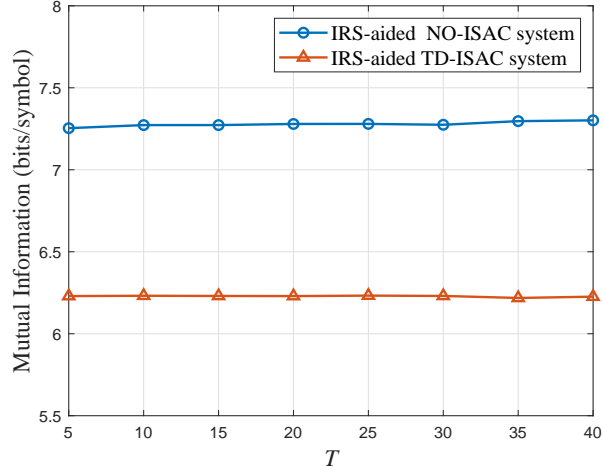}
  }
  \subfigure[Localization performance.]
  {
  \label{IVSD_T_L}
  \includegraphics[width=3.5in]{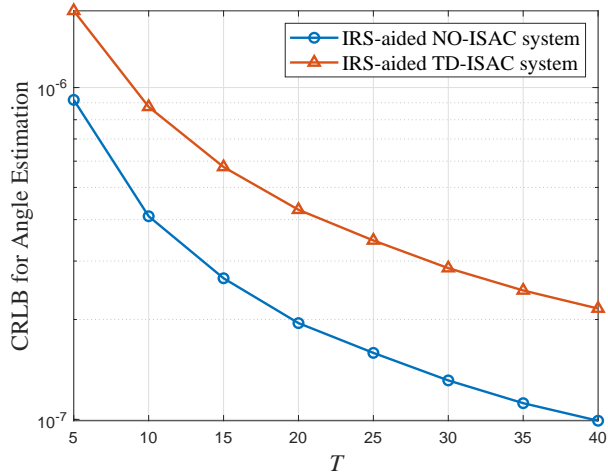}
  }
  \caption{Performance comparison of the NO-ISAC system and the TD-ISAC system versus $T$.}
\end{figure} 
\par Fig.~\ref{IVSD_T_C} and Fig.~\ref{IVSD_T_L} compare the performance of the IRS-aided NO-ISAC system with that of the IRS-aided TD-ISAC system versus the number of time slots (per time block) $T$. It is obvious that both the communication performance and the localization performance of the IRS-aided NO-ISAC system are superior to those of the IRS-aided TD-ISAC system, demonstrating the advantage of the proposed IRS-aided NO-ISAC system. This superiority is due to the following two reasons. On the one hand, the IRS-aided NO-ISAC system conducts data transmission and localization simultaneously within the whole time block. On the other hand, for the IRS-aided NO-ISAC system, the adverse effect of signal randomness on localization performance can be effectively eliminated by increasing $T$. As the number of time slots (per time block) becomes larger, the advantage of the IRS-aided NO-ISAC system in terms of localization performance becomes more significant.
In addition, since the number of communication symbols to be estimated per time block is proportional to $T$ and these symbols are independent of each other, the communication performance of two systems remains stable as the number of time slots (per time block) increases.
\begin{figure}[htb]
  \centering
  \subfigure[Communication performance.]
  {
  \label{IVSD_SNR_C}
  \includegraphics[width=3.5in]{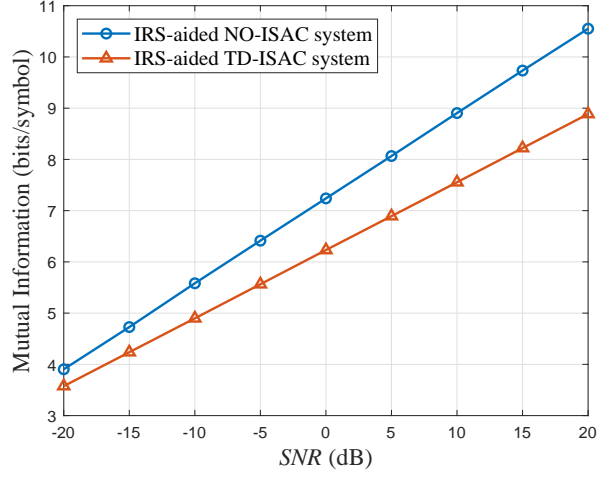}
  }
  \subfigure[Localization performance.]
  {
  \label{IVSD_SNR_L}
  \includegraphics[width=3.5in]{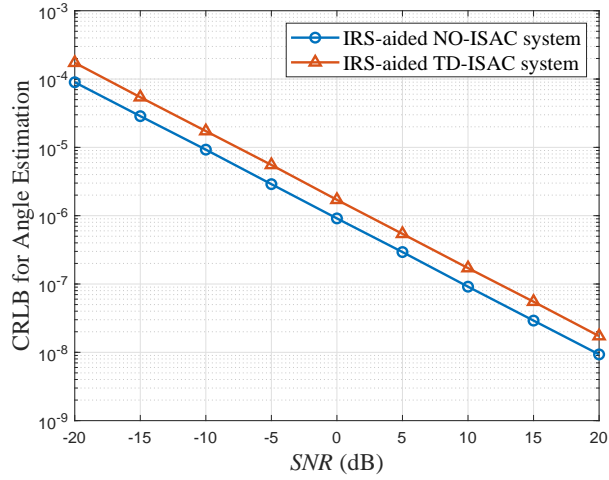}
  }
  \caption{Performance comparison of the NO-ISAC system and the TD-ISAC system versus SNR.}
\end{figure} 
\par Fig.~\ref{IVSD_SNR_C} and Fig.~\ref{IVSD_SNR_L} compare the performance of the IRS-aided NO-ISAC system and the IRS-aided TD-ISAC system versus the user received SNR. In the whole SNR regime, the IRS-aided NO-ISAC system outperforms the IRS-aided TD-ISAC system in terms of both communication performance and localization performance. Moreover, as the SNR increases, the communication performance gap becomes more significant, which indicates that the IRS-aided NO-ISAC system can benefit more from the increase of the user received SNR. This communication performance gap is mainly because, the IRS-aided NO-ISAC system carries out data transmission during the whole time block, while the IRS-aided TD-ISAC system carries out data transmission only within partial time slots of a time block.
\begin{figure}[htb]
  \centering
  \subfigure[Communication performance.]
  {
  \label{IVSD_L_C}
  \includegraphics[width=3.5in]{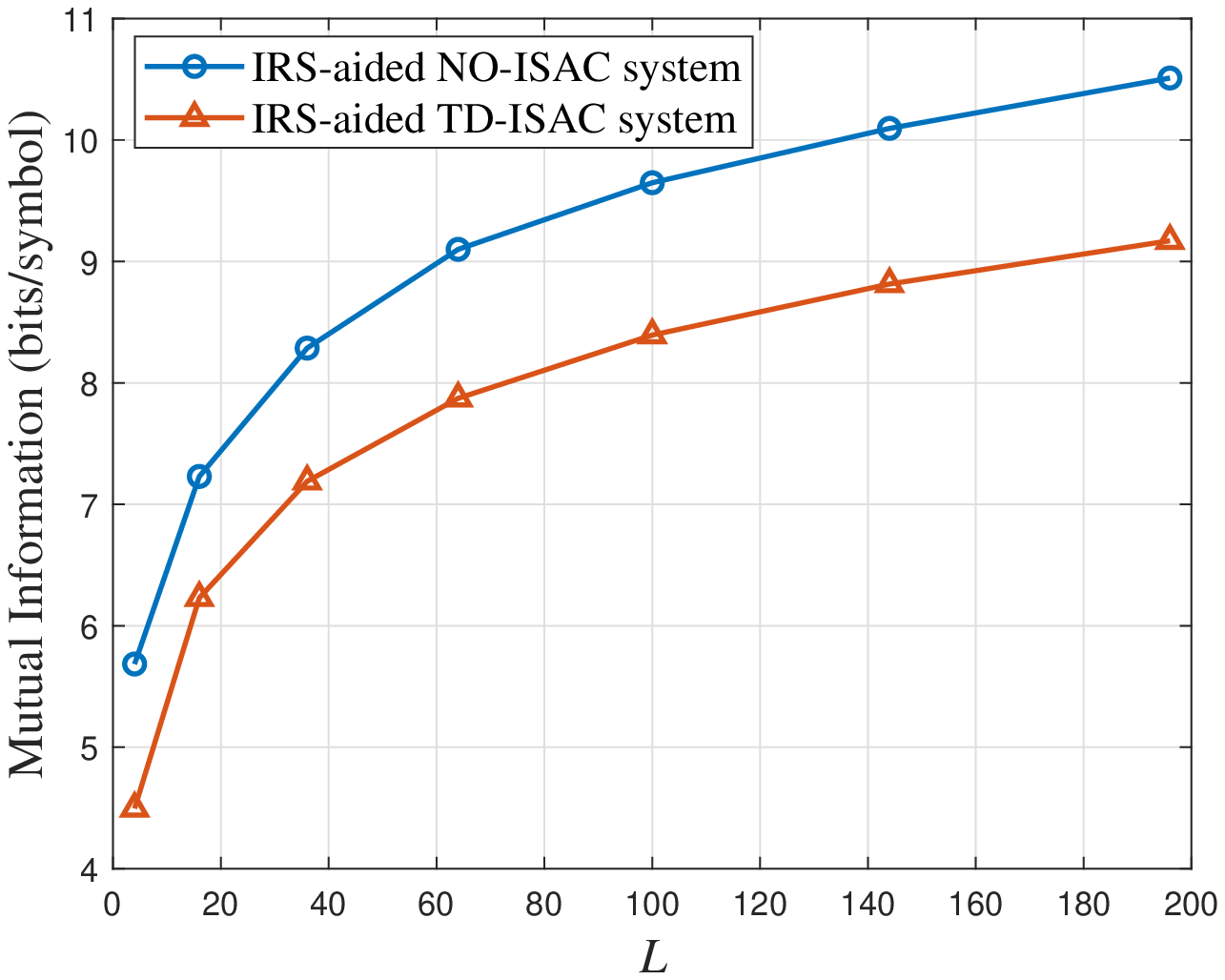}
  }
  \subfigure[Localization performance.]
  {
  \label{IVSD_L_L}
  \includegraphics[width=3.5in]{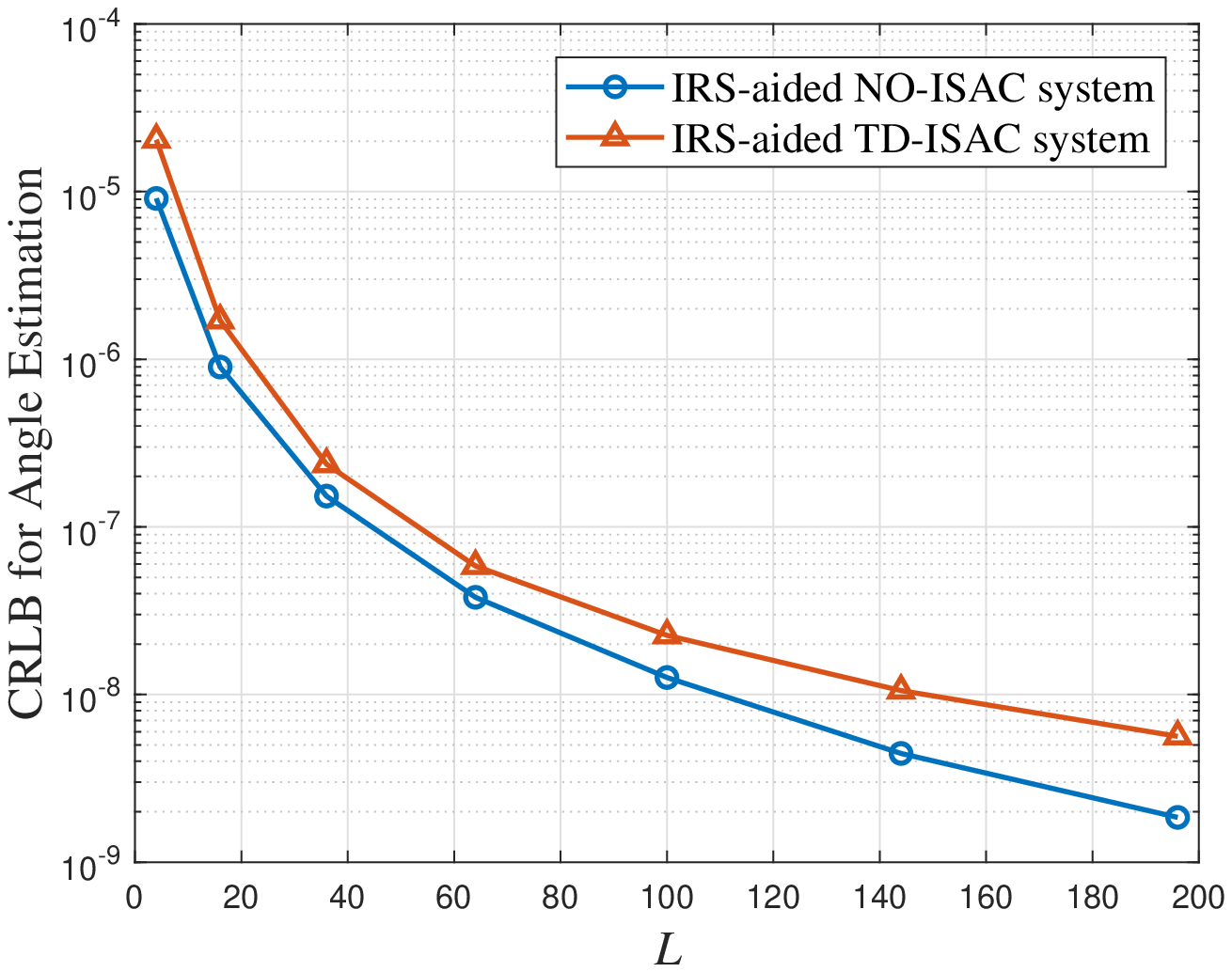}
  }
  \caption{Performance comparison of the NO-ISAC system and the TD-ISAC system versus $L$.}
\end{figure} 
\par Finally, in Fig.~\ref{IVSD_L_C} and Fig.~\ref{IVSD_L_L}, we compare the performance of the IRS-aided NO-ISAC system and the IRS-aided TD-ISAC system versus the number of IRS elements. For any configuration of the number of IRS elements, the IRS-aided NO-ISAC system performs better than the IRS-aided TD-ISAC system in terms of both communication performance and localization performance. As the number of IRS elements increases, the communication performance and localization performance of both systems improve remarkably
due to the increased IRS beamforming gain, and the advantage of the IRS-aided NO-ISAC system over the  IRS-aided TD-ISAC system becomes more pronounced.

\subsection{Trade-off between Communication Performance and Localization Performance}
In this subsection, we demonstrate the effectiveness of the proposed CRLB-based beamforming algorithm and reveal the trade-off between communication performance and localization performance. We adopt the average mutual information in (\ref{IISAC}) and the CRLB for angle estimation to measure the communication performance and localization performance of the IRS-aided NO-ISAC system, respectively.
\begin{figure}[htb]
  \centering
  \includegraphics[width=4in]{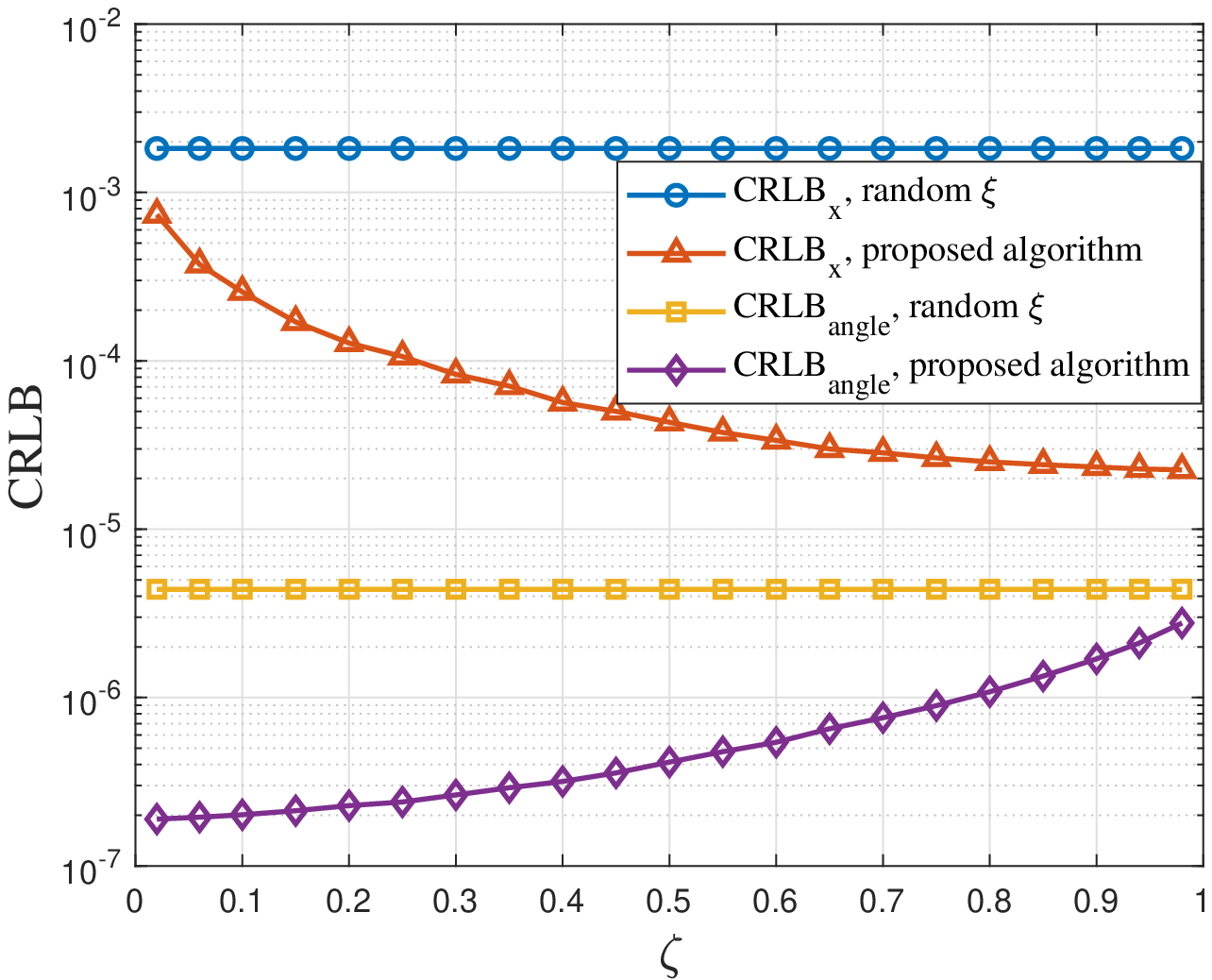}
  \caption{Beamforming performance versus $\zeta$.}
  \label{BF}
\end{figure}
\par Fig.~\ref{BF} illustrates the performance of the proposed CRLB-based beamforming algorithm versus the weight coefficient $\zeta$. For comparison, the performance of the beamforming algorithm with random phase shift beam ${\bm\xi }$ is presented as a benchmark. It is obvious that the proposed beamforming algorithm achieves much better communication performance and localization performance than the random phase shift algorithm. In addition, with the increase of the weight coefficient $\zeta$, the communication CRLB decreases while the CRLB for angle estimation increases, demonstrating that the proposed CRLB-based beamforming algorithm can effectively make trade-off between the communication performance and the localization performance by adjusting the weight coefficient $\zeta$. This result also proves the effectiveness of the proposed modified CRLB for guiding the design of the IRS-aided NO-ISAC system as a unified performance metric.

\begin{figure}[htb]
  \centering
  \includegraphics[width=4in]{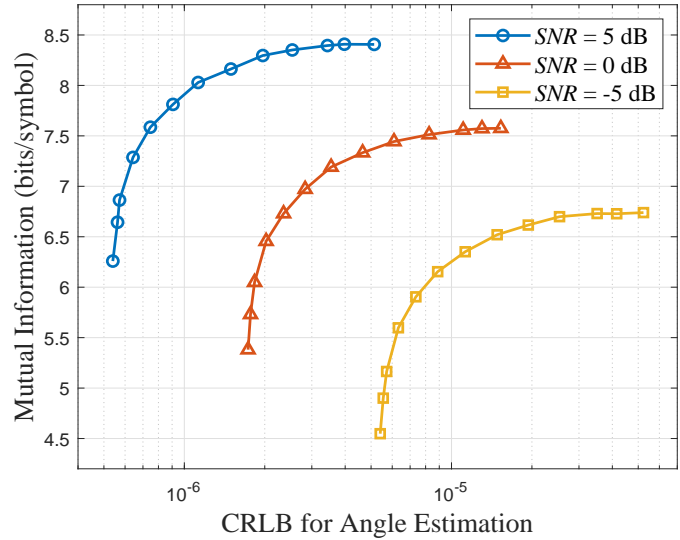}
  \caption{Trade-off between communication performance and localization performance with different SNRs.}
  \label{tradeoff_SNR}
\end{figure}
\par Fig.~\ref{tradeoff_SNR} shows the trade-off between communication performance and localization performance with different user received SNRs. We can observe that increasing the communication/localization performance would degrade the localization/communication performance, which reveals the displacement relation between  communication and localization performance. By sacrificing the performance of one, the performance of another can be improved. Moreover, the communication-localization curve includes three regions, trade-off region, communication saturation region as well as localization saturation region. In the trade-off region, sacrificing the performance of one can effectively enhance that of another. For example, when $SNR=0$ dB, by sacrificing the mutual information from $7.5$ bits/symbol to $6.5$ bits/symbol, the CRLB for angle estimation improves from $6\times 10^{-6}$ to $2\times 10^{-6}$. In the communication/localization saturation region, despite sacrificing the performance of one a lot, little performance gain of another can be obtained. For example, when $SNR=5$ dB, the communication performance only improves from $8.3$ bits/symbol to $8.4$ bits/symbol despite sacrificing the CRLB for angle estimation from $2\times 10^{-6}$ to $5\times 10^{-6}$.
\begin{figure}[htb]
  \centering
  \includegraphics[width=4in]{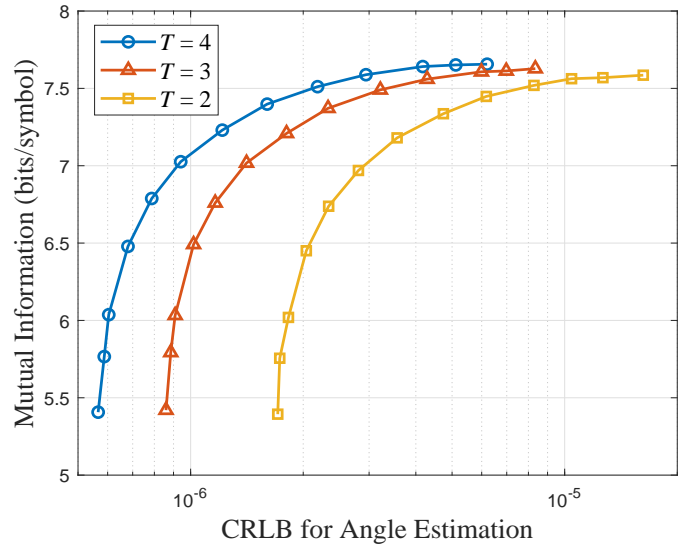}
  \caption{Trade-off between communication performance and localization performance with different $T$.}
  \label{tradeoff_T}
\end{figure}
\par Fig.~\ref{tradeoff_T} presents the trade-off between communication performance and localization performance with different numbers of time slots (per time block) $T$. It can be seen that increasing the number of time slots (per time block) drastically improves the localization performance but has little effect on the communication performance. This indicates that the localization performance sacrificed for improving the communication performance could be compensated by increasing the number of time slots (per time block). For example, as the mutual information increases from $6.5$ bits/symbol to $7.5$ bits/symbol, the CRLB for angle estimation remains constant at $2\times 10^{-6}$ by increasing $T$ from $2$ to $4$.
\begin{figure}[htb]
  \centering
  \includegraphics[width=4in]{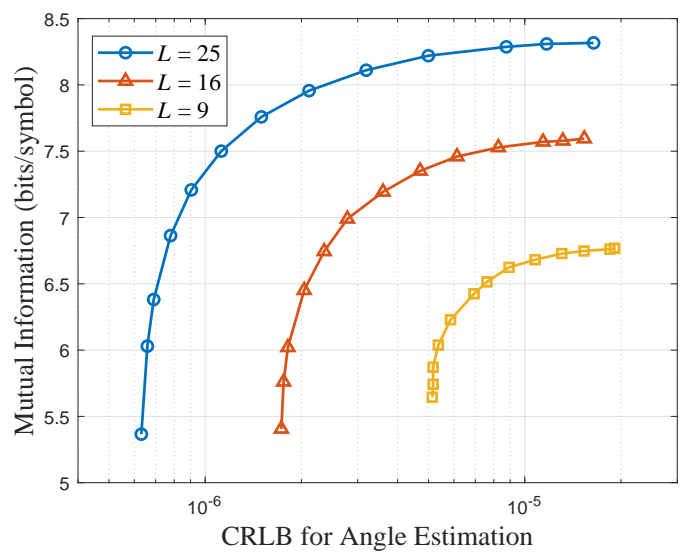}
  \caption{Trade-off between communication performance and localization performance with different $L$.}
  \label{tradeoff_L}
\end{figure}
\par Finally, Fig.~\ref{tradeoff_L} illustrates the trade-off between communication and localization performance with different numbers of IRS elements $L$. It can be readily seen that the adjustable range of both communication and localization performance is enlarged considerably as the number of IRS elements increases. The reason is that more IRS elements would bring more spatial resources, which are flexibly allocated by the proposed beamforming algorithm for balancing communication performance and localization performance. For example, as the number of IRS elements increases from
9 to 25, the adjustable range of the mutual information is enlarged from about $1$ bits/symbol to about $3$ bits/symbol, while that of the CRLB for angle estimation is enlarged from  $5.7$ dB to $14.1$ dB.

\section{Conclusion}\label{section6}
In this paper, we proposed an IRS-aided NO-ISAC system, characterized its performance with a unified metric, and designed a beamforming algorithm for improving its joint performance of communication and localization. In particular, we proposed the modified CRLB metric  to characterize the joint communication and localization performance of the IRS-aided NO-ISAC system, and derived its closed-form expression. Based on the modified CRLB, we proposed a joint active and passive beamforming algorithm for balancing communication performance and localization performance. Numerical results showed that, despite the adverse effect of  signal randomness on localization, with enough time slots per time block, the IRS-aided NO-ISAC system with random communication signals can achieve comparable localization performance to the IRS-aided localization system with dedicated positioning reference signals. Moreover, it was demonstrated that both the communication and localization performance of the IRS-aided NO-ISAC system is better than those of the IRS-aided TD-ISAC system. Investigation on the trade-off between communication performance and localization performance verified the effectiveness of the proposed beamforming algorithm, revealed the existence of the trade-off region and the communication/localization saturation region, and demonstrated the benefit of applying more IRS elements  in enlarging the trade-off region.

\begin{appendices}
\section{Proof of Theorem 1}
To derive the FIM $\mathbf{J}=\mathbf{J}_P+\mathbf{J}_D$, we calculate $\mathbf{J}_P$ and $\mathbf{J}_D$ respectively in the following.
\subsection{Calculate ${\bf J}_P$}
\par  First, we calculate the $(i,j)$-th element of ${\bf J}_P$, which is given by
\begin{align}
[\mathbf{J}_P]_{i,j}= -\mathbb{E} \left( \frac{\partial ^2\ln \!\:p\left( \mathbf{x} \right)}{\partial \theta _i\partial \theta _j} \right),i,j=1,\cdots ,T+2,
\end{align}
where the probability density function $p\left( \mathbf{x} \right) $ is expressed as
\begin{align}
p\left( \mathbf{x} \right) =\left( \pi \sigma _{x}^{2} \right) ^{-T}\exp \left[ -\frac{\sum_{t=1}^T{\left| x\left( t \right) -\mu _x \right|^2}}{\sigma _{x}^{2}} \right] .\label{CRLB1}
\end{align}
Since the AoAs $\gamma _{\mathrm{I}2\mathrm{U}}^{\mathrm{D}}$ and $\varphi _{\mathrm{I}2\mathrm{U}}^{\mathrm{D}}$ are constant and the transmit signals $x\left( t \right) , t=1,\cdots ,T$ are independent of each other, only the second partial derivative of $p\left( \mathbf{x} \right)$ with respect to $x(t)$ may not be $0$, which can be calculated according to \cite{hunger2007introduction} as
\begin{align}
&\frac{\partial \ln \!\:p\left( \mathbf{x} \right)}{\partial x\left( t \right)}=-\frac{1}{\sigma _{x}^{2}}\frac{\partial \left( x\left( t \right) -\mu _x \right) \left( x\left( t \right) -\mu _x \right) ^*}{\partial x\left( t \right)}=-\frac{1}{\sigma _{x}^{2}}\left( x\left( t \right) -\mu _x \right) ^*,
\\
&\frac{\partial ^2\ln \!\:p\left( \mathbf{x} \right)}{\partial x^2\left( t \right)}=-\frac{1}{\sigma _{x}^{2}}\frac{\partial \left( x\left( t \right) -\mu _x \right) ^*}{\partial x\left( t \right)}=0.
\end{align}
Then, we obtain $\mathbf{J}_P$ as 
\begin{align}\label{JPcal}
\mathbf{J}_P=\mathbf{0}_{\left( T+2 \right) \times \left( T+2 \right)}.
\end{align}
\subsection{Calculate ${\bf J}_D$}
\par The $(i,j)$-th element of ${\bf J}_D$ can be expressed as \cite{kay1993fundamentals}
\begin{align}\label{JD}
\left[ \mathbf{J}_D \right] _{i,j}&=2\mathcal{R} \left\{ \,\frac{\partial \mathbf{h}_{x}^{\mathrm{H}}}{\partial \theta _i}\mathbf{\Sigma }^{-1}\frac{\partial \mathbf{h}_x}{\partial \theta _j} \right\} +\mathrm{tr}\left\{ \mathbf{\Sigma }^{-1}\frac{\partial \mathbf{\Sigma }}{\partial \theta _i}\mathbf{\Sigma }^{-1}\frac{\partial \mathbf{\Sigma }}{\partial \theta _j} \right\} 
\\
&=\sum_{t=1}^T{\frac{2}{\sigma _{z}^{2}}\mathcal{R} \left\{ \left[ \frac{\partial \mathbf{h}_x\left( t \right)}{\partial \theta _i} \right] ^{\mathrm{H}}\frac{\partial \mathbf{h}_x\left( t \right)}{\partial \theta _j} \right\}}\notag
\\
&=\sum_{t=1}^T{\left[ \mathbf{J}_D\left( t \right) \right] _{i,j}},i,j=1,\cdots ,T+2,\notag
\end{align}
where
\begin{align}
&\mathbf{\Sigma }=\sigma _{z}^{2}\mathbf{I}_{MT\times MT},
\\
&\left[ \mathbf{J}_D\left( t \right) \right] _{i,j}=\frac{2}{\sigma _{z}^{2}}\mathcal{R} \left\{ \left[ \frac{\partial \mathbf{h}_x\left( t \right)}{\partial \theta _i} \right] ^{\mathrm{H}}\frac{\partial \mathbf{h}_x\left( t \right)}{\partial \theta _j} \right\}.\label{JDT} 
\end{align}
\par First, we calculate the partial derivative of  $\mathbf{h}_x\left( t \right) $ with respect to $\theta _i, i=1,\cdots ,T$ (i.e., $x\left( 1 \right) ,\cdots ,x\left( T \right) $). Since the transmitted signals are independent of each other, we have
\begin{align}
\frac{\partial \mathbf{h}_x\left( t_1 \right)}{\partial x\left( t_2 \right)}=\frac{\partial \mathbf{H}_{\mathrm{I}2\mathrm{U}}\mathbf{\Theta H}_{\mathrm{B}2\mathrm{I}}\mathbf{w}x\left( t_1 \right)}{\partial x\left( t_2 \right)}=\begin{cases}
	\boldsymbol{\beta }_x,t_1=t_2\\
	0,t_1\ne t_2\\
\end{cases},\label{beta3}
\end{align}
where $\boldsymbol{\beta }_x\triangleq \mathbf{H}_{\mathrm{I}2\mathrm{U}}\mathbf{\Theta H}_{\mathrm{B}2\mathrm{I}}\mathbf{w}$.
\par Then, we calculate the partial derivative of $\mathbf{h}_x\left( t \right) =\mathbf{H}_{\mathrm{I}2\mathrm{U}}\mathbf{\Theta H}_{\mathrm{B}2\mathrm{I}}\mathbf{w}x\left( t \right) $ with respect to $\theta _{T+1}$ and $\theta _{T+2}$ (i.e., $\frac{\partial \mathbf{h}_x\left( t \right)}{\partial \gamma _{\mathrm{I}2\mathrm{U}}^{\mathrm{A}}}$ and $\frac{\partial \mathbf{h}_x\left( t \right)}{\partial \varphi _{\mathrm{I}2\mathrm{U}}^{\mathrm{A}}}$). Noticing that $d_{\mathrm{IRS}}=d_{\mathrm{user}}=\frac{\lambda}{2}$, and the URAs of both the IRS and the user lie on the $y$-$o$-$z$ plane, we have $\gamma _{\mathrm{I}2\mathrm{U}}^{\mathrm{A}}=-\gamma _{\mathrm{I}2\mathrm{U}}^{\mathrm{D}}$ and $\left| \varphi _{\mathrm{I}2\mathrm{U}}^{\mathrm{A}}-\varphi _{\mathrm{I}2\mathrm{U}}^{\mathrm{D}} \right|=\pi 
$. As such, the $(m,l)$-th element of  $\mathbf{H}_{\mathrm{I}2\mathrm{U}}$ can be compactly written as
\begin{align}
\left[ \mathbf{H}_{\mathrm{I}2\mathrm{U}} \right] _{m,l}=\alpha _{\mathrm{I}2\mathrm{U}}e^{j\left[ \left( m_y\left( m \right) -l_y\left( l \right) \right) u_{\mathrm{I}2\mathrm{U}}^{\mathrm{A}}+\left( m_z\left( m \right) -l_z\left( l \right) \right) v_{\mathrm{I}2\mathrm{U}}^{\mathrm{A}} \right]},m=1,\cdots ,M,l=1,\cdots ,L,
\end{align}
where
\begin{align}
&u_{\mathrm{I}2\mathrm{U}}^{\mathrm{A}}=\pi \cos \left( \gamma _{\mathrm{I}2\mathrm{U}}^{\mathrm{A}} \right) \sin \left( \varphi _{\mathrm{I}2\mathrm{U}}^{\mathrm{A}} \right) ,
\\
&v_{\mathrm{I}2\mathrm{U}}^{\mathrm{A}}=\pi \sin \left( \gamma _{\mathrm{I}2\mathrm{U}}^{\mathrm{A}} \right).
\end{align}
And we can obtain the partial derivative of $\mathbf{h}_x\left( t \right) $ with respect to $\gamma _{\mathrm{I}2\mathrm{U}}^{\mathrm{A}}$ and $\varphi _{\mathrm{I}2\mathrm{U}}^{\mathrm{A}}$ as
\begin{align}
&\boldsymbol{\beta }_{\gamma ,t}\triangleq \frac{\partial \mathbf{h}_x\left( t \right)}{\partial \gamma _{\mathrm{I}2\mathrm{U}}^{\mathrm{A}}}=\mathbf{H}_{\mathrm{I}2\mathrm{U},\gamma}\mathbf{\Theta H}_{\mathrm{B}2\mathrm{I}}\mathbf{w}x\left( t \right) \label{beta1},
\\
&\boldsymbol{\beta }_{\varphi ,t}\triangleq \frac{\partial \mathbf{h}_x\left( t \right)}{\partial \varphi _{\mathrm{I}2\mathrm{U}}^{\mathrm{A}}}=\mathbf{H}_{\mathrm{I}2\mathrm{U},\varphi}\mathbf{\Theta H}_{\mathrm{B}2\mathrm{I}}\mathbf{w}x\left( t \right) \label{beta2},
\end{align}
where the $(m,l)$-th elements of $\mathbf{H}_{\mathrm{I}2\mathrm{U},\gamma}\textbf{}$ and $\mathbf{H}_{\mathrm{I}2\mathrm{U},\varphi} $ are respectively defined as
\begin{align}
&\left[ \mathbf{H}_{\mathrm{I}2\mathrm{U},\gamma} \right] _{m,l}\triangleq \frac{\partial \left[ \mathbf{H}_{\mathrm{I}2\mathrm{U}} \right] _{m,l}}{\partial \gamma _{\mathrm{I}2\mathrm{U}}^{\mathrm{A}}}=j\left[ \left( m_y\left( m \right) -l_y\left( l \right) \right) \frac{\partial u_{\mathrm{I}2\mathrm{U}}^{\mathrm{A}}}{\partial \gamma _{\mathrm{I}2\mathrm{U}}^{\mathrm{A}}}+\left( m_z\left( m \right) -l_z\left( l \right) \right) \frac{\partial v_{\mathrm{I}2\mathrm{U}}^{\mathrm{A}}}{\partial \gamma _{\mathrm{I}2\mathrm{U}}^{\mathrm{A}}} \right] \left[ \mathbf{H}_{\mathrm{I}2\mathrm{U}} \right] _{m,l}\notag
\\
&\quad=j\pi \left( \left( m_z\left( m \right) -l_z\left( l \right) \right) \cos \left( \gamma _{\mathrm{I}2\mathrm{U}}^{\mathrm{A}} \right) -\left( m_y\left( m \right) -l_y\left( l \right) \right) \sin \left( \gamma _{\mathrm{I}2\mathrm{U}}^{\mathrm{A}} \right) \sin \left( \varphi _{\mathrm{I}2\mathrm{U}}^{\mathrm{A}} \right) \right) \left[ \mathbf{H}_{\mathrm{I}2\mathrm{U}} \right] _{m,l},
\\
&\left[ \mathbf{H}_{\mathrm{I}2\mathrm{U},\varphi} \right] _{m,l}\triangleq \frac{\partial \left[ \mathbf{H}_{\mathrm{I}2\mathrm{U}} \right] _{m,l}}{\partial \varphi _{\mathrm{I}2\mathrm{U}}^{\mathrm{A}}}=j\left[ \left( m_y\left( m \right) -l_y\left( l \right) \right) \frac{\partial u_{\mathrm{I}2\mathrm{U}}^{\mathrm{A}}}{\partial \varphi _{\mathrm{I}2\mathrm{U}}^{\mathrm{A}}}+\left( m_z\left( m \right) -l_z\left( l \right) \right) \frac{\partial v_{\mathrm{I}2\mathrm{U}}^{\mathrm{A}}}{\partial \varphi _{\mathrm{I}2\mathrm{U}}^{\mathrm{A}}} \right] \left[ \mathbf{H}_{\mathrm{I}2\mathrm{U}} \right] _{m,l}\notag
\\
&\quad=j\pi \left( m_y\left( m \right) -l_y\left( l \right) \right) \cos \left( \gamma _{\mathrm{I}2\mathrm{U}}^{\mathrm{A}} \right) \cos \left( \varphi _{\mathrm{I}2\mathrm{U}}^{\mathrm{A}} \right) \left[ \mathbf{H}_{\mathrm{I}2\mathrm{U}} \right] _{m,l}.
\end{align}
\par Noticing that $\mathbf{J}_D\left( t \right)$ is a real symmetric matrix, we express it as
\begin{align}\label{JDt}
\mathbf{J}_D\left( t \right) =\left[ \begin{matrix}
	\mathbf{J}_{D1}\left( t \right)&		\mathbf{J}_{D2}\left( t \right)\\
	\left[ \mathbf{J}_{D2}\left( t \right) \right] ^{\mathrm{T}}&		\mathbf{J}_{D3}\left( t \right)\\
\end{matrix} \right] ,
\end{align}
where $\mathbf{J}_{D1}\left( t \right) \in \mathbb{R} ^{T\times T}$, 
$\mathbf{J}_{D2}\left( t \right) \in \mathbb{R} ^{T\times 2}$, and $\mathbf{J}_{D3}\left( t \right) \in \mathbb{R} ^{2\times 2}$. By substituting (\ref{beta1}), (\ref{beta2}), (\ref{beta3}) into (\ref{JDT}), we obtain
\begin{align}
&\left[ \mathbf{J}_{D1}\left( t \right) \right] _{i,j}=\left\{ \begin{array}{c}
	\begin{array}{l}
	\frac{2}{\sigma _{z}^{2}}\left\| \boldsymbol{\beta }_x \right\| ^2,i=j=t\\
	0,else\\
\end{array}\\
\end{array} \right. ,
\\
&\left[ \mathbf{J}_{D2}\left( t \right) \right] _{i,j}=\begin{cases}
	\frac{2}{\sigma _{z}^{2}}\mathcal{R} \left\{ \boldsymbol{\beta }_{x}^{\mathrm{H}}\boldsymbol{\beta }_{\gamma ,t} \right\} ,i=t,j=1\\
	\frac{2}{\sigma _{z}^{2}}\mathcal{R} \left\{ \boldsymbol{\beta }_{x}^{\mathrm{H}}\boldsymbol{\beta }_{\varphi ,t}\right\} ,i=t,j=2\\
	0,else\\
\end{cases},
\\
&\mathbf{J}_{D3}\left( t \right) =\frac{2}{\sigma _{z}^{2}}\left[ \begin{matrix}
	\left\| \boldsymbol{\beta }_{\gamma ,t} \right\| ^2&		\mathcal{R} \left\{ \boldsymbol{\beta }_{\gamma ,t}^{\mathrm{H}}\boldsymbol{\beta }_{\varphi ,t} \right\}\\
	\mathcal{R} \left\{ \boldsymbol{\beta }_{\varphi ,t}^{\mathrm{H}}\boldsymbol{\beta }_{\gamma ,t} \right\}&		\left\| \boldsymbol{\beta }_{\gamma ,t} \right\| ^2\\
\end{matrix} \right] .
\end{align}
\par Substituting (\ref{JDt}) to (\ref{JD}) yields
\begin{align}\label{JDcal}
\mathbf{J}_D=\sum_{t=1}^T{\mathbf{J}_D\left( t \right)}=\left[ \begin{matrix}
	\mathbf{J}_{D1}&		\mathbf{J}_{D2}\\
	\mathbf{J}_{D2}^{\mathrm{T}}&		\mathbf{J}_{D3}\\
\end{matrix} \right] ,
\end{align}
where
\begin{align}
&\mathbf{J}_{D1}=\frac{2}{\sigma _{z}^{2}}\left\| \boldsymbol{\beta }_x \right\| ^2\mathbf{I}_{T\times T},
\\
&\mathbf{J}_{D2}=\frac{2}{\sigma _{z}^{2}}\left[ \begin{matrix}
	\mathcal{R} \left\{ \boldsymbol{\beta }_{x}^{\mathrm{H}}\boldsymbol{\beta }_{\gamma ,1} \right\}&		\mathcal{R} \left\{ \boldsymbol{\beta }_{x}^{\mathrm{H}}\boldsymbol{\beta }_{\varphi ,1} \right\}\\
	\vdots&		\vdots\\
	\mathcal{R} \left\{ \boldsymbol{\beta }_{x}^{\mathrm{H}}\boldsymbol{\beta }_{\gamma ,T} \right\}&		\mathcal{R} \left\{ \boldsymbol{\beta }_{x}^{\mathrm{H}}\boldsymbol{\beta }_{\varphi ,T} \right\}\\
\end{matrix} \right] ,
\\
&\mathbf{J}_{D3}=\frac{2}{\sigma _{z}^{2}}\sum_{t=1}^T{\left[ \begin{matrix}
	\!\left\| \boldsymbol{\beta }_{\gamma ,t} \right\| ^2&		\!\mathcal{R} \left\{ \boldsymbol{\beta }_{\gamma ,t}^{\mathrm{H}}\boldsymbol{\beta }_{\varphi ,t}\! \right\}\\
	\mathcal{R} \left\{ \boldsymbol{\beta }_{\varphi ,t}^{\mathrm{H}}\boldsymbol{\beta }_{\gamma ,t}\! \right\}&		\left\| \boldsymbol{\beta }_{\varphi ,t} \right\| ^2\\
\end{matrix} \right]}.\label{JD3}
\end{align}
\par Finally, combing (\ref{JPcal}) and (\ref{JDcal}), we obtain the FIM with respect to $\boldsymbol{\theta }$ as
\begin{align}
    \mathbf{J}=\mathbf{J}_P+\mathbf{J}_D=\mathbf{J}_D.
\end{align}

\end{appendices}
\bibliographystyle{IEEEtran}
\bibliography{references}{}

\end{document}